\newtheorem{observation}{Observation}
\newtheorem{lemma}{Lemma}
\newtheorem{theorem}{Theorem}
\newtheorem{definition}{Definition}
\newtheorem{example}{Example}
\newcommand{\calA}{\ensuremath{\mathcal{A}}\xspace}
\newcommand{\calP}{\ensuremath{\mathcal{P}}\xspace}
\newcommand{\calO}{\ensuremath{\mathcal{O}}\xspace}
\newcommand{\faircap}[1]{\smash{\varphi_{#1}}\xspace}
\newcommand{\faircapP}[1]{\smash{\varphi^{\calP}_{#1}}\xspace}
\newcommand{\faircapA}[1]{\smash{\varphi^{\calA}_{#1}}\xspace}
\renewcommand{\paragraph}[1]{{\medskip \noindent \bfseries \textsf{#1} \hspace{0.05cm}}}
\title{Designing Exploration Contracts}
\author{Martin Hoefer\thanks{Department of Computer Science, RWTH Aachen University, Germany, {\tt mhoefer@cs.rwth-aachen.de}. Supported by DFG Research Unit ADYN (project number 411362735) and DFG grant Ho 3831/9-1 (project number 514505843).} \and Conrad Schecker\thanks{Institute for Computer Science, Goethe University Frankfurt, Germany, {\tt schecker@em.uni-frankfurt.de}.}  \and Kevin Schewior\thanks{University of Southern Denmark, Denmark, {\tt kevs@sdu.dk}. Supported by the Independent Research Fund Denmark, Natural Sciences, grant DFF-0135-00018B.}}
\begin{document}
\maketitle

\begin{abstract}
We study a natural application of contract design in the context of sequential exploration problems.
In our principal--agent setting, a search task is delegated to an agent. The agent performs a sequential exploration of $n$ boxes, suffers the exploration cost for each inspected box, and selects the content (called the \emph{prize}) of one inspected box as outcome. Agent and principal obtain an individual value based on the selected prize. To influence the search, the principal a-priori designs a contract with a non-negative payment to the agent for each potential prize. The goal of the principal is to maximize her expected reward, i.e., value minus payment. Interestingly, this natural contract scenario shares close relations with the \emph{Pandora's Box} problem. 

We show how to compute optimal contracts for the principal in several scenarios. A popular and important subclass is that of \emph{linear} contracts, and we show how to compute optimal linear contracts in polynomial time. For general contracts, we obtain optimal contracts under the standard assumption that the agent suffers cost but obtains value only from the transfers by the principal. More generally, for general contracts with non-zero agent values for outcomes we show how to compute an optimal contract in two cases: (1) when each box has only one prize with non-zero value for principal and agent, (2) for i.i.d.\ boxes with a single prize with positive value for the principal.
\end{abstract}

\section{Introduction}
In many real-world situations, e.g., the search for a job candidate or house, a decision maker is faced with the choice between different alternatives of a-priori unknown value, which can be explored at some cost. Due to missing qualifications or time constraints, in many markets such exploration tasks are not executed directly by the decision maker. Rather, a decision maker (called \emph{principal} \calP in the following) can delegate the exploration to an \emph{agent} \calA{}, whose incentives are potentially misaligned with that of \calP.

Indeed, suppose the principal intends to buy a house. They might know an inspection cost and have some prior knowledge about the value of each house. However, they might not be qualified to determine the exact market value by inspecting it. They decide to delegate the search to a real-estate agent. The agent can have an individual valuation for each house (e.g., a provision that is paid internally for selling the house). This individual value might also depend on the condition of the house, which is revealed only after inspection. Similar situations arise, e.g., when the principal wants to invest in a financial product and delegates this search to a financial agent.

In these settings, as a consequence of delegating the search, \calP can merely observe the outcome of the search, not the other actions taken by \calA. This leads to a possibly undesirable outcome for \calP, even when they have the power to commit to accepting or rejecting certain outcomes~\cite{Bechtel22DelegatedPandorasBox}.

A natural approach to align the incentives of an agent with the goals of the principal in such hidden-action settings are \emph{contracts}. A contract allows for utility transfers to be made from the principal to the agent as a function of the outcome of the delegated task. While contract theory is a well-established~\cite{BoltonD05} and celebrated~\cite{Nobel16} area in economics, the interest in \emph{algorithmic} contract design has surged only relatively recently, see e.g.,~\cite{Duetting19SimpleVsOptimalContracts, DuttingRT21, Duetting21CombinatorialContracts, CastiglioniM022, GuruganeshSW023}.

In this work, we initiate the study of algorithmic contract design for exploration problems. Specifically, we consider the following problem (for a fully formal definition, see Section~\ref{sec:prelims}): There are $n$ boxes, the $i$-th of which contains a prize, drawn independently from a known probability distribution, and can be opened by paying a known exploration cost $c_i$. The $j$-th prize from box $i$ has some utilities $a_{ij}$ and $b_{ij}$ for \calA and \calP, respectively. Based on that information, \calP commits to a contract, specifying a transfer $t_{ij}$ for each prize, where we assume non-negative transfers (limited liability). Given such a contract, \calA picks an adaptive search strategy that allows to optimize the trade-off between value of the selected prize (for \calA) and total exploration cost. \calA uses the strategy to sequentially open boxes by paying the exploration cost. Depending on the observed prize, it determines the next box to open or to stop by selecting a prize from an open box. Assuming \calA selects prize $j$ from box $i$, the resulting final utility of \calP is $b_{ij}-t_{ij}$. The goal is to efficiently compute a contract that maximizes the expectation of the latter quantity. Here we assume that ties in the optimization of \calA are broken in favor of \calP, which is a standard assumption that makes the maximization problem well-defined.

For the problem of computing optimal contracts, there is a solution by linear programming that runs in time polynomial in the number of outcomes and the number of actions of the agent (cf., e.g.,~\cite{Duetting19SimpleVsOptimalContracts}). Note that this result does not yield a polynomial-time algorithm for our problem since the action space is extremely succinctly represented---its size is lower-bounded by the number of opening orders, $n!$.

\paragraph{Our Contribution.} We show that, in a variety of natural cases, optimal contracts can be computed in polynomial time (such as, e.g., optimal \emph{linear} contracts, or optimal contracts when the agent has no intrinsic value). While our results leave open the question whether computing optimal contracts efficiently is always possible in general, our insights suggest that doing so is a highly non-trivial task and represents one (of several) very interesting avenue for future research.

Notably, \calA faces an exploration problem known in the literature as the \emph{Pandora's Box Problem}. It was famously proposed and analyzed by Weitzman~\cite{Weitzman79OptimalSearch}, who characterized optimal search strategies by a simple greedy rule. Starting from~\cite{KleinbergWW16}, this problem has recently received a lot of attention at the intersection of economics and computer science (see, e.g.,~\cite{Singla18, GuptaJSS19, BeyhaghiK19, BoodaghiansFLL23, BeyhaghiC23, FuLL23} and a recent survey~\cite{BeyhaghiCSurvey23}).
We reformulate Weitzman's characterization for \calA's task: The \emph{fair cap} of box $i$ is a value such that the expected amount of $a_{ij}+t_{ij}$ exceeding the fair cap is precisely $c_i$. The boxes are considered in non-increasing order of fair caps. If the best $a_{ij}+t_{ij}$ (initially 0) observed so far is larger than the next fair cap, the next box is opened; if it is smaller, a corresponding prize is accepted. If neither is the case, \calA is indifferent.

The first observation we make is that Weitzman's policy does \emph{not} solve \calA's problem entirely. The reason is that \calA breaks ties in favor of \calP---recall that this is the standard assumption that makes the maximization problem well-defined. Note that the number of boxes that share some given fair cap $\varphi$ may be linear in $n$, leading to a number of possible ways of choosing the order that is exponential in $n$. Indeed, the order of such boxes may matter to \calP since different orders may lead to stopping with prizes that lead to vastly different utilities for \calP. A complicating factor is that ties whether to select a prize with agent utility equal to $\varphi$ have to be broken simultaneously. Interestingly, we show that the problem of breaking ties can be reduced---in polynomial time but in a non-obvious way---to solving \emph{different instances} of the classic Pandora's box problem.

With a solution of \calA's problem at hand, the main trade-off inherent to an optimal contract is as follows: Transfers manipulate the fair caps, the resulting exploration order, and the acceptance decisions of \calA in such a way that leads to a more favorable outcome for \calP\ -- but transfers are costly for \calP which is unfavorable. 

We start by considering \emph{linear contracts}~\cite{Holmstrom87AggregationAndLinearity,Duetting19SimpleVsOptimalContracts}, a class of simple and very intuitive contracts that has received a lot of attention due to its widespread use in applications. The agent receives a constant fraction of the principal's revenue as commission. Formally, a linear contract is represented by a number $\alpha\in[0,1]$, which defines transfers $t_{ij}=\alpha\cdot b_{ij}$.

To compute linear exploration contracts, we identify a polynomial-time computable set of \emph{critical} values of $\alpha$ at which the set of \calA-optimal policies (i.e., the set of policies that satisfy the above description) changes. By reasoning about the behavior of \calP's utility between any two such values, we can argue that the optimal contract needs to be a critical value. This type of argument is reminiscent of arguments from the recent literature (e.g.,~\cite{Duetting21CombinatorialContracts, Duetting24CombinatorialContractsBeyond, Dughmi24SupermodularContracts}), but here it requires special care. Due to the tie breaking in favor of \calP, which---as discussed above---is a key property here, we need to keep track of the \emph{set} of \calA-optimal policies.

For standard contract design, the authors in~\cite{Duetting19SimpleVsOptimalContracts} describe a class of instances in which the restriction to linear contracts leads to a multiplicative loss of $n$ in \calP's achievable utility, where $n$ is the number of actions.
In this class of instances, action $i$ with cost $c_i$ deterministically leads to an individual outcome with value $R_i$.
We can create one box for each action $i$, with opening cost $c_i$ and a single outcome with agent utility $0$ and principal utility $R_i$. This leads to a straightforward one-to-one correspondence between contracts in the two instances that preserves utilities and linearity.
Thus, the multiplicative loss of $n$ transfers to our model, which motivates our search for optimal \emph{general} exploration contracts.

We first consider general contracts under the standard assumption that $a_{ij}=0$, i.e., that the agent has no intrinsic motivation to explore any boxes and is only motivated by payments from \calP. 
In contract design, this assumption is usually without loss of generality, since arbitrary intrinsic agent values $a_{ij} \neq 0$ can be accounted for with action costs to obtain an equivalent instance with agent values $a_{ij} = 0$. 
However, for the exploration contracts we consider, such an adjustment changes the costs of \emph{exploration strategies} (rather than individual boxes). 
This substantially changes the structure of the problem. 
Indeed, as we see below, optimal policies for the problem with non-zero agent value have substantially different properties than the ones for zero value. 

Clearly, the utility that \calP can extract by solving the exploration problem themselves (i.e., ignoring the agent and paying the cost themselves) is a straightforward baseline; no policy can extract more value for \calP. Solving the problem from \calP's perspective yields fair caps $\smash{\varphi_i^\calP}$. If all $a_{ij}  = 0$, then by transferring any value exceeding $\smash{\varphi_i^\calP}$ to \calA, the \calP-optimal policy is adopted by \calA, without a loss in utility for \calP. 
This implies, in particular, that delegating the exploration to \calA using an optimal contract yields the same utility for \calP as if she performed the exploration on her own.
Formally, the idea is somewhat inspired by~\cite{KleinbergWW16} for the original Pandora's Box problem but different.

An orthogonal special case that has been considered in the literature (see, e.g.,~\cite{Bechtel22DelegatedPandorasBox}) is that of \emph{binary boxes}. That is, each box contains one of only two prizes, and one of them has $0$ value for both \calP and \calA while the other one is arbitrary. Without any payments, the fair cap (from \calA's perspective again) of each box is its \emph{basic fair cap}. The principal now faces the question of whether to lift (by transfers) the basic fair caps of favorable boxes above (or up to) the fair caps of less favorable boxes.
While the latter is true in general, the special structure of binary boxes allows us to compute an optimal exploration contract by considering boxes in non-increasing order of their basic fair caps and moving their fair cap \emph{greedily}. 
Note that this algorithm alone does not suffice to tackle more general cases. Indeed, when boxes are not binary, it may be profitable for the principal to choose different payments for two boxes with the same distribution and the same fair caps -- a condition we encounter in the next and final case.

The final case we consider has a more general value structure, but we compromise on the boxes having different distributions. Specifically, we consider instances with only a single positive prize for \calP, and all distributions are identical. Still, transfers can be made depending on the box from which the prize originates. Again, the complexity is reduced by the fact that only a single payment per box has to be decided. Through an intricate sequence of exchange arguments and the help of continuization, we establish that there is an optimal contract with a simple structure: (i) In a first phase, the prize with positive principal value gets immediately accepted, and all transfers are identical; (ii) all transfers for boxes in the second phase are also identical. These conditions allow to find the optimal contract by enumerating a polynomial-sized set of contracts.

In this final case, we also give a family of instances in which both the above phases exist and have substantial lengths in any optimal solution. Hence, perhaps surprisingly, such a relatively simple case already has optimal contracts with a fairly complicated structure.

\paragraph{Overview.} After a review of related work in the subsequent section, we formally introduce our problem in Section~\ref{sec:prelims} along with preliminary observations. Section~\ref{sec:linear} treats linear contracts. Section~\ref{sec:general} deals with general contracts in several special cases (no agent value in Section~\ref{sec:noValue} and binary boxes in Section~\ref{sec:binary}). General contracts for i.i.d.\ boxes with a single positive prize are discussed in Section~\ref{sec:iid}. Missing proofs are provided in Appendices~\ref{app:tieBreak},~\ref{app:binary} and~\ref{app:iid}.

\subsection{Related Work}

Contract design has recently gained a lot of attention from a computational point of view.
Duetting et al.~\cite{Duetting19SimpleVsOptimalContracts} advocated the study of linear contracts as an alternative to the more complicated (and sometimes unintuitive) general contracts. They show robustness results and give parameterized approximation guarantees w.r.t.\ optimal (general) contracts. For settings in which the outcome space is succinctly represented, hardness results are shown in~\cite{DuttingRT21}. In a similar but different approach, \emph{combinatorial contracts}~\cite{Duetting21CombinatorialContracts} let the agent choose a subset of actions which stochastically determines the (binary) outcome. Important recent work in the area of combinatorial contracts includes \cite{Duetting24CombinatorialContractsBeyond}, \cite{Dughmi24SupermodularContracts}. Contracts were also studied with multiple agents~\cite{Duetting23MultiAgentContracts, Dughmi24SupermodularContracts}, ambiguity~\cite{Duetting23AmbiguousContracts}, and private types~\cite{Alon21ContractsWithPrivateCosts} with connections to classic mechanism design~\cite{AlonDLT23}. More generally, contracts have been of interest in specific application domains, such as classification in machine learning~\cite{SaigTR23}.

Most closely related is prior work by Postl~\cite{Postl04}, who analyzes our contract-design problem for two boxes without intrinsic agent value. The paper provides an explicit characterization of the chosen box in the optimal contract, but neither considers algorithmic aspects nor the reduction to the standard Pandora's Box problem we discover in Sec.~\ref{sec:noValue}. Concurrent to our work, Ezra et al~\cite{EzraFS24} study the special case without agent value. In this special case, they also give a polynomial-time algorithm for linear contracts (without discussing the issues of tie-breaking in favor of \calP). For general contracts, the paper studies a technical approach when there is a constant number of prizes. Finally, the authors show NP-hardness for correlated distributions in the boxes.

Delegation is a related approach in the principal--agent framework and also received a lot of attention in the economics literature starting from seminal work of Holstrom~\cite{Holmstrom79MoralHazard}. Computational aspects of delegation are receiving interest recently, initiated by Kleinberg and Kleinberg~\cite{KleinbergKleinberg18}. In these models, the principal \calP delegates a (search) task to an agent \calA, again with potentially different interests, who has to inspect alternatives and \emph{propose} an observed prize to \calP.
Instead of committing to a contract with payments, \calP limits the set of prizes she is willing to accept upon proposal by \calA. 
A prominent objective is to find good acceptance policy for \calP and bound their performance by the \textit{delegation gap}, which measures the multiplicative loss in utility for \calP in comparison to the case when \calP performs the (undelegated) search problem themselves. 
On a technical level, there are close connections to prophet inequalities~\cite{KleinbergKleinberg18} and online contention resolution schemes, which were established in different variants, such as multiple agents~\cite{Hajiaghayi23MultipleAgents}, stochastic probing~\cite{Bechtel21DelegatedStochasticProbing, Bechtel22DelegatedPandorasBox}, as an online problem~\cite{Braun23DelegatedOnlineSearch}, or with limited information about agent utilities~\cite{CastiglioniM021}.
Perhaps most related is a version of the problem studied in~\cite{Bechtel22DelegatedPandorasBox}. Here, a principal delegates an instance of the Pandora's Box problem by committing to a set of acceptable outcomes. An agent has to perform the costly inspection of alternatives. This standard delegation setting has some obvious limitations, e.g., the agent would never perform an inspection if his expected intrinsic value does not cover the inspection cost. A model variant with transfers is also studied in~\cite{Bechtel22DelegatedPandorasBox} in which the principal can reimburse the agent for exploration costs. Crucially, this is impossible in \emph{hidden-action principal-agent} settings that we consider, where \calP simply cannot directly reimburse \calA for his expenses, since she cannot observe which inspections have been performed by \calA.

In a more general context, online optimization in principal--agent settings has generated significant interest recently, e.g., in game-theoretic versions of the Pandora box problem~\cite{DingFHTX23}, general optimal stopping problems~\cite{HahnHS20, HahnHS22}, or with unknown agent utilities~\cite{CastiglioniCM020,ZuIX21,FengTX22,BabichenkoTXZ22}.

\section{Preliminaries}
\label{sec:prelims}

There are two parties, a principal $\calP$ and an agent $\calA$, and $n$ boxes. Each box $i \in [n]$ has an opening cost $c_i \ge 0$ and contains one of $m$ possible prizes. Prize $j \in [m]$ in box $i$ yields a value-pair $(a_{ij},b_{ij})$, where $a_{ij} \ge 0$ is the value for $\calA$ and $b_{ij} \ge 0$ the value for $\calP$. For each box $i \in [n]$, the prize in the box is distributed independently according to a known prior. We denote by $p_{ij}\geq 0$ the probability that prize $j$ is in box $i$.

$\calP$ would like to motivate $\calA$ to open and inspect boxes in order to find and select a good prize (for $\calP$). $\calP$ cannot observe the actions taken by $\calA$ (i.e., which boxes were opened in which order), only the final selected box with the prize inside is revealed. Instead, $\calP$ can commit to an \emph{exploration contract} (or simply, \emph{contract}) $T$: For each selected prize $(i,j)$, she specifies an amount $t_{ij} \in [0,b_{ij}]$\footnote{We make the natural assumption that $t_{ij} \le b_{ij}$, i.e., every transfer in the contract is bounded by the prize value of $\calP$. For all scenarios we study here (linear contracts, no agent value, binary boxes, i.i.d.\ with single positive prize for $\calP$) it is straightforward to see that this assumption is w.l.o.g. -- there is an optimal contract that satisfies $t_{ij} \le b_{ij}$ for all $(i,j) \in [n]\times [m]$. It is a very interesting open problem to prove that this holds beyond the cases we study here.} that she pays to $\calA$. Then the final utility of $\calP$ is $b_{ij} - t_{ij}$. 

Our model is an instance of the standard principal-agent model, where the hidden action of the agent $\calA$ is the policy to inspect and select from the boxes. Specifically, the task of $\calA$ is to open and inspect boxes to find and select at most one prize from an opened box. Formally, a \emph{policy} specifies in every situation, based on the previously seen prizes, which box to open next, if any, or otherwise which prize to select from an opened box. Given a contract $T$, she is facing the classic Pandora's Box problem~\cite{Weitzman79OptimalSearch}: She obtains a utility of $a_{ij} + t_{ij}$ for the selected prize ($0$ if no prize is selected) minus the opening cost $c_i$ for all opened boxes. Weitzman~\cite{Weitzman79OptimalSearch} shows that every optimal policy of $\calA$ that maximizes his expected total utility (called \emph{\calA-optimal policy} in the following) is in the form of the following index policy: For each box $i \in [n]$ compute an index or \emph{fair cap} $\faircap{i}$ such that 
\[
    \sum_{j \in [m]} p_{ij} \max\{0, a_{ij} + t_{ij} - \faircap{i}\} = c_i.
\]
The boxes are considered in non-increasing order of fair caps. Suppose box $i$ is the next unopened box in this order, and the best prize \calA has found thus far is $v$. If no box has been opened, then $v=0$, and otherwise $v=a_{i'j'}+t_{i'j'}$ for some $(i',j')$ where $i'$ is a box that has already been opened, and prize $j'$ has been found in it. There are three cases: (1) $v<\faircap{i}$: \calA opens box $i$; (2) $v=\faircap{i}$: \calA is indifferent between opening box $i$ and stopping opening boxes; and (3) $v>\faircap{i}$: \calA stops opening boxes. Note that any box $i$ with $\faircap{i} < 0$ is never opened. For each $(i,j) \in [n]\times [m]$ we define the capped value $\kappa_i = \min\{ a_{ij} + t_{ij}, \faircap{i}\}$. A policy as described above always selects the prize in box $i^* \in \arg\max_i \kappa_i$, i.e., the prize with the largest capped agent value.

Even under these conditions, there may still be several \calA-optimal policies. Specifically,
\begin{enumerate}
    \item[(i)] the choice of $\faircap{i}$ is not unique when $c_i=0$,
    \item[(ii)] the non-decreasing order of fair caps may not be unique,
    \item[(iii)] in the case $v=\faircap{i}$ above, the choice of whether to stop or continue is not unique,
    \item[(iv)] among the observed prizes, the prize $(i,j)$ maximizing $a_{ij}+t_{ij}$ may not be unique.
\end{enumerate}

We assume that \calA breaks ties in favor of $\calP$\footnote{This is a standard assumption in bi-level problems. On a technical level, it ensures that the optimization problem of finding an optimal contract for $\calP$ is well-defined.}. Among the set of $\calA$-optimal policies, she selects a \emph{\calP-optimal} one, i.e., a policy that maximizes the expected utility for \calP. (Note that such a policy may be vastly different from an \calP-optimal policy among \emph{all} policies.)
We call a policy that is selected in this way an \emph{optimal policy (under contract $T$)}.

In case (iv), it is clear how to resolve the ambiguity in favor of $\calP$: Simply select $(i,j)$ maximizing $b_{ij}-t_{ij}$. The other cases, however, can give rise to a very large number of potential policies. Interestingly, in Appendix~\ref{app:tieBreak} we show that it is always possible to implement the tie-breaking in polynomial time and find an optimal policy under contract $T$. We prove this result through a reduction to the original Pandora's Box problem.

\begin{theorem}\label{thm:best-order}
    Given a contract $T$, an optimal policy can be computed in polynomial time.
\end{theorem}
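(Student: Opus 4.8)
The plan is to peel off the trivial tie-breaking, reduce the remaining freedom to the choice of inspection order inside each ``fair-cap block'', and then recognize this ordering task as a (non-obvious) instance of the classical Pandora's Box problem, which Weitzman's rule solves exactly.

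\textbf{Trimming the decisions.} Ambiguity (iv) is resolved immediately and for free: among the observed prizes attaining the maximum of $a_{ij}+t_{ij}$, \calA selects one of maximum $b_{ij}-t_{ij}$, which is what a \calP-optimal policy does. Having fixed, for every box with $c_i>0$, its unique fair cap $\faircap{i}$ and discarded boxes with $\faircap{i}<0$ (never opened), the only remaining freedom in an \calA-optimal policy is: (i) the fair caps of the zero-cost boxes, (ii) the order among boxes that share a fair cap, and (iii) whether to stop or continue at an indifference point $v=\faircap{i}$. I would first prove that resolving (iii) always by ``continue'' is without loss for \calP. At such a point \calA already holds a prize of capped value $\faircap{i}$, while every box inspected afterwards has fair cap---hence capped value---at most $\faircap{i}$; therefore opening box $i$, and continuing thereafter, can only enlarge the pool of selectable prizes of capped value $\faircap{i}$ and never forfeits a future opportunity. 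Formally this is an induction on the number of remaining boxes, using that \calP's value-to-go is monotone in the best currently selectable prize and that \calA can always stop immediately after opening box $i$ (with a backup of \calP-value at least $v=\faircap{i}$). For the zero-cost boxes I would show it is without loss to assign each its \emph{canonical} fair cap $\max_{j:\,p_{ij}>0}(a_{ij}+t_{ij})$: since the prize a box reveals does not depend on when it is opened, placing a zero-cost box earlier only risks triggering a premature stop, while the extra selectable option it then produces has capped value at most its canonical fair cap and is therefore dominated; alternatively, perturb all costs to be strictly positive, solve, and take the limit.

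\textbf{The per-block problem and Pandora's Box.} After these reductions the policy is fully determined once we fix, for each maximal block $G$ of boxes sharing a common fair cap $\varphi$, the order in which \calA inspects $G$. Given that order and the state $(v_0,W_0)$ with which \calA enters $G$ (best agent value $v_0\le\varphi$ so far, with associated \calP-value $W_0$), \calA's behaviour is forced: it inspects the boxes of $G$ in that order, stopping as soon as one reveals agent value strictly above $\varphi$ and otherwise---by the continue-at-indifference convention---inspecting all of $G$. Hence, if \calA stops inside $G$, the selected prize has maximum $b_{ij}-t_{ij}$ among the observed prizes of capped value exactly $\varphi$ (those of $G$ with agent value $\ge\varphi$, together with the entry backup when $v_0=\varphi$); otherwise \calA leaves $G$ having revealed all of $G$ and proceeds. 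One checks that the probability of this second event and the resulting exit state do not depend on the order of $G$, and that the distribution of the entry state of a block does not depend on the orders chosen in earlier blocks; consequently the orders can be optimized block by block, independently, and for each $G$ the objective is to choose the order so as to maximize \calP's expected reward when \calA stops inside $G$, averaged over the (order-independent) entry-state distribution. The crux---and the technically most demanding step---is to show that this is an instance of the classical Pandora's Box problem: one builds, for each $i\in G$, an auxiliary box whose prize encodes the \calP-value $i$ contributes (as the stop-forcing prize or as a backup) and whose cost encodes the ``risk'' that $i$ terminates the inspection of $G$, so that Weitzman's index orders the auxiliary boxes exactly in a \calP-optimal inspection order of $G$ (the entry backup being folded into this instance).

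\textbf{Assembly.} Computing fair caps, forming blocks, resolving (iv), computing the entry-state distributions, building the auxiliary Pandora instances and running Weitzman's rule on each take polynomial time; the fair caps, block orders, and continue-at-indifference rule specify the desired policy explicitly, and it is \calA-optimal and, among those, \calP-optimal by construction. I expect the main obstacle to be the reduction in the second step---designing the auxiliary prize distribution and cost so that the Pandora index reproduces \calP's optimal block order---whereas the monotonicity/exchange argument for ``continue'', the treatment of zero-cost boxes, and tie-break (iv) are comparatively routine.
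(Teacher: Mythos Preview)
Your overall architecture—partition into fair-cap blocks, argue that blocks can be optimized independently, and reduce each block to a classical Pandora instance—matches the paper. The crucial Pandora reduction is indeed the heart of the argument, and you correctly flag it as the main difficulty (the paper carries it out via a specific construction with an index $\tau_i$, a shifted reward, and a derived opening cost).

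There is, however, a genuine gap in your preliminary reduction. The claim that resolving ambiguity (iii) by ``always continue'' is without loss for \calP is false, and the argument you sketch conflates capped value with actual agent value. When \calA continues at $v=\faircap{i}$ and box $i$ reveals a prize with $a_{ij}+t_{ij}>\faircap{i}$, that prize is strictly preferred by \calA (selection is by $a_{ij}+t_{ij}$, not by $\kappa$), so it is \emph{forced}; it does not merely ``enlarge the pool'' subject to \calP-favorable tie-breaking. A concrete counterexample: box $i_0$ has fair cap $20$ and with probability $0.9$ reveals a prize with agent value $10$ and \calP-value $100$; box $i$ has fair cap $10$, cost $5$, and reveals $(20,0)$ or $(0,0)$ each with probability $0.5$. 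Conditioned on reaching the indifference point $v=10=\faircap{i}$, stopping yields $100$ for \calP while continuing yields $50$ in expectation. Hence ``always continue'' is strictly suboptimal, and the per-block problem cannot be set up assuming it.

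The paper handles this differently: it does \emph{not} fix (iii) in advance. Instead, the stop/continue choice is folded into the per-phase Pandora instance through a fallback option of value $v^*$ (the \calP-value of the entry backup when $a^*=\varphi$), and the index policy on that instance decides whether any box of the phase is worth opening at all. So the fix is not to patch the ``continue'' lemma but to drop it and let the auxiliary Pandora instance absorb the decision. Your treatment of zero-cost boxes also differs (you defer them via the minimal fair cap, whereas the paper opens them first with a large fair cap); the paper's choice has a cleaner justification, though your perturbation remark would bypass the issue.
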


We study contracts for $\calP$ that steer the index policy executed by $\calA$ towards good outcomes for $\calP$.
We explore both linear and general contracts. A \emph{linear} contract is given by a single number $\alpha \in [0,1]$, and $t_{ij} = \alpha \cdot b_{ij}$. 
Linear contracts are popular because of their simplicity, but they often suffer from substantial limitations of the achievable revenue. 
As such, we also explore \emph{general} contracts, in which we only require every payment to be non-negative $t_{ij} \ge 0$ for every $i \in [n]$ and $j \in [m]$. A (linear) contract $T$ that, among all possible (linear) contracts, achieves maximum expected utility for \calP when \calA executes an optimal policy under $T$, is called \emph{optimal (linear) contract}.

\section{Optimal Linear Contracts}
\label{sec:linear}

In this section we consider linear contracts. Recall that these are contracts characterized by a single $\alpha\in[0,1]$ such that $t_{ij}=\alpha\cdot b_{ij}$ for all $(i,j)\in[n]\times[m]$. Our result is the following.

\begin{theorem}\label{thm:linear}
    An optimal linear exploration contract can be computed in polynomial time.
\end{theorem}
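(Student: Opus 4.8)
The plan is to discretize the parameter $\alpha \in [0,1]$ into polynomially many intervals by identifying a finite set of \emph{critical values} of $\alpha$, then argue that an optimal linear contract can be taken at one of these critical values or at a well-controlled limit approaching one of them. The guiding intuition is that $\calP$'s utility, as a function of $\alpha$, is piecewise well-behaved: between two consecutive critical values, the combinatorial structure of $\calA$'s decision problem -- in particular the set of $\calA$-optimal policies -- does not change, so $\calP$'s utility varies in a simple (piecewise-rational, monotone-in-a-controlled-way) fashion on each piece.

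First I would determine how the fair caps depend on $\alpha$. With $t_{ij} = \alpha b_{ij}$, the defining equation for box $i$ becomes $\sum_j p_{ij} \max\{0, a_{ij} + \alpha b_{ij} - \faircap{i}(\alpha)\} = c_i$, so $\faircap{i}(\alpha)$ is a continuous, piecewise-linear, non-decreasing function of $\alpha$ whose breakpoints occur where some prize $(i,j)$ enters or leaves the support of the $\max$, i.e., where $a_{ij} + \alpha b_{ij} = \faircap{i}(\alpha)$; these breakpoints are solutions of linear equations in $\alpha$ and hence polynomially many and explicitly computable per box. The critical values of $\alpha$ are then: (a) these per-box breakpoints of the $\faircap{i}$; (b) values of $\alpha$ at which two fair caps coincide, $\faircap{i}(\alpha) = \faircap{i'}(\alpha)$, which -- since each $\faircap{i}$ is piecewise linear -- again reduces to solving linearly many linear equations; and (c) values of $\alpha$ at which some capped value $\kappa_i(\alpha) = \min\{a_{ij}+\alpha b_{ij}, \faircap{i}(\alpha)\}$ crosses a fair cap $\faircap{i'}(\alpha)$ of another box (these govern the tie-breaking cases (iii) and (iv) in the preliminaries). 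All of these are roots of linear functions of $\alpha$, so the total number of critical values is polynomial, and they can be computed and sorted in polynomial time.

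Next I would fix an open interval $(\alpha_1, \alpha_2)$ between consecutive critical values and argue that the ordering of fair caps, the set of ``which prizes are ever accepted,'' and hence the entire set of $\calA$-optimal policies is constant on the interval (this is where the critical values were chosen precisely to make it so). On such an interval, for any fixed $\calA$-optimal policy, $\calP$'s expected utility is $\sum \Pr[\text{select }(i,j)]\,(b_{ij} - \alpha b_{ij}) = (1-\alpha)\cdot(\text{constant})$ where the selection probabilities are constant -- a \emph{linear} function of $\alpha$ -- and since $\calA$ breaks ties in favor of $\calP$, $\calP$'s realized utility is the maximum over the (constant) finite set of $\calA$-optimal policies of such linear functions, hence itself piecewise linear and in particular maximized at an endpoint of the interval. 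Then by a limiting argument one shows the supremum over $[0,1]$ is attained either at a critical value itself or as a one-sided limit at a critical value; in the latter case the limiting policy set is a subset of the $\calA$-optimal policies at the critical value, so evaluating $\calP$'s utility at each critical value (using Theorem~\ref{thm:best-order} to compute the optimal policy there, and being careful to also evaluate the relevant one-sided limits, which amounts to running the same computation with a perturbed tie-break) suffices. Taking the best over all polynomially many critical values gives the optimal linear contract in polynomial time.

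The main obstacle -- and the step requiring the most care -- is the tie-breaking in favor of $\calP$ at the critical values, and correspondingly the behavior of $\calP$'s utility as $\alpha$ approaches a critical value from either side. At a critical value the set of $\calA$-optimal policies typically \emph{jumps} (it is larger, or simply different, than on the adjacent open intervals), so $\calP$'s utility as a function of $\alpha$ need not be continuous there; we must verify that the one-sided limits are themselves achievable as $\calP$'s utility under some tie-breaking at the critical value, so that checking the critical values (together with the correct one-sided tie-break conventions) does not miss the true supremum. This is exactly the subtlety flagged in the introduction -- one must track the \emph{set} of $\calA$-optimal policies, not just a single representative -- and handling it rigorously, including simultaneous tie-breaks among boxes sharing a fair cap and among equally-good prizes, is where Theorem~\ref{thm:best-order} and its underlying reduction to Pandora's Box instances do the heavy lifting.
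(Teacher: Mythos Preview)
Your approach is essentially the paper's: identify polynomially many critical values of $\alpha$ (where fair caps coincide, where a prize value $a_{ij}+\alpha b_{ij}$ meets a fair cap, or where a fair cap hits $0$), observe that on each open interval between consecutive critical values the set of \calA-optimal policies is fixed and \calP's utility is $(1-\alpha)\cdot c$ for a constant $c$, and then evaluate each critical value via Theorem~\ref{thm:best-order}. One simplification the paper exploits that lets you drop the one-sided-limit bookkeeping: any \calA-optimal policy on the open interval $(\alpha_1^c,\alpha_2^c)$ remains \calA-optimal at the endpoints (the endpoint may only \emph{add} \calA-optimal policies), so \calP's utility at $\alpha_1^c$ is at least the right-hand limit there, and hence the optimum is attained \emph{exactly} at a critical value.
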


Let $\alpha_1,\alpha_2\in[0,1]$ such that the \emph{set of \calA-optimal} policies (i.e., index policies) is the same for all $\alpha\in[\alpha_1,\alpha_2]$. Consider the expected utility of \calP according to an optimal policy (i.e., an \calP-optimal policy among this set of \calA-optimal policies) as a function of $\alpha$ within the interval $[\alpha_1,\alpha_2]$. The following straightforward observation implies that this function is a linear and non-increasing function of $\alpha$.

\begin{observation}\label{obs:utility-principal}
    For every policy, there exists a constant $c$ such that the expected utility of \xspace\calP as a function of $\alpha$ is $(1-\alpha)\cdot c$.
\end{observation}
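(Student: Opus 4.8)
The plan is to fix an arbitrary policy $\pi$ and to track how \calP's realized payoff depends on $\alpha$ outcome by outcome. A policy is a decision tree that prescribes, based only on the prizes observed so far, which box to open next or which prize to finally select; as a combinatorial object it does not involve the contract parameter $\alpha$ at all. Hence, for any fixed realization of the $n$ (independent) box contents, the sequence of opened boxes and the finally selected prize $(i,j)$ -- or the decision to select nothing -- is determined by $\pi$ alone and is independent of $\alpha$.

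Next I would compute \calP's payoff for a fixed realization. Under the linear contract with parameter $\alpha$ we have $t_{ij}=\alpha\cdot b_{ij}$, so whenever $\pi$ selects prize $(i,j)$, \calP receives
\[
b_{ij}-t_{ij} \;=\; b_{ij}-\alpha\,b_{ij} \;=\; (1-\alpha)\,b_{ij},
\]
and when $\pi$ selects no prize, \calP receives $0=(1-\alpha)\cdot 0$. In either case the realized payoff equals $(1-\alpha)$ times the principal value of the selected prize (taken to be $0$ if nothing is selected), a quantity that depends on the realization but not on $\alpha$. Writing $B_\pi$ for this random principal value, linearity of expectation over the priors of the boxes gives
\[
\mathbb{E}\big[(1-\alpha)\,B_\pi\big] \;=\; (1-\alpha)\cdot \mathbb{E}[B_\pi] \;=\; (1-\alpha)\cdot c,
\]
where $c:=\mathbb{E}[B_\pi]$ depends only on $\pi$ and the priors. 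Since every $b_{ij}\ge 0$ we have $c\ge 0$, which is exactly what makes this function non-increasing in $\alpha$, as needed in the discussion preceding the statement.

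There is essentially no hard step: the only point requiring care is that the policy is held fixed across all values of $\alpha$, so that the distribution of the selected prize -- and hence the constant $c$ -- is genuinely $\alpha$-independent. The feature particular to linear contracts that makes the argument work is that both the transfer paid out and the value retained by \calP are proportional to $b_{ij}$, so the per-outcome payoff factors cleanly as $(1-\alpha)\cdot b_{ij}$ with no additive $\alpha$-dependent term.
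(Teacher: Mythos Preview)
Your proof is correct and is precisely the straightforward argument the paper has in mind; the observation is stated without proof there, and your write-up supplies exactly the one-line computation $(b_{ij}-t_{ij})=(1-\alpha)b_{ij}$ followed by linearity of expectation.
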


Hence, within $[\alpha_1,\alpha_2]$, an optimal contract is $\alpha_1$. To find a polynomial-time algorithm for computing the global optimum $\alpha$, it therefore suffices to show that, in polynomial time, one can find a partition of the interval $[0,1]$ into (polynomially many) subintervals such that, in each subinterval, the set of \calA-optimal policies is constant. (Strictly speaking, we will encounter a technicality at the endpoints of the subintervals.) In the following, we will define a (polynomial-time computable) set of \emph{critical values} in $[0,1]$ such that, in an interval (strictly) between any two consecutive critical values, the set of \calA-optimal policies is constant.

Towards this definition, for any given $\alpha \in [0,1]$ and every $(i,j) \in [n] \times [m]$, let $v_{ij}(\alpha) := a_{ij} + \alpha \cdot b_{ij}$ denote the value of prize $j$ in box $i$ for $\calA$.
Furthermore, let $\faircap{i}(\alpha)$ denote the (unique) fair cap of box $i \in [n]$ with $c_i > 0$ as a function of $\alpha$, i.e., it holds that
\[\sum_{j \in [m]} p_{ij} \max\left\{0, v_{ij}(\alpha) - \faircap{i}(\alpha)\right\} = c_i\] for all $\alpha \in [0,1]$. Note that $\faircap{i}(\alpha)$ is a continuous function of $\alpha$ in $[0,1]$.

We call $\alpha\in[0,1]$ a critical value if $\alpha\in\{0,1\}$ or one of the following properties holds:
\begin{enumerate}
    \item[(a)] There exist boxes $i,i'\in[n]$ with $c_i>0,c_{i'}>0$ such that the order of their fair caps changes at $\alpha$. Formally, $\faircap{i}(\alpha)=\faircap{i'}(\alpha)$, and there exists $\varepsilon>0$ such that $\faircap{i}(\alpha')\neq\faircap{i'}(\alpha')$ for all $\alpha'\in(\alpha-\varepsilon,\alpha)$ or for all $\alpha'\in(\alpha,\alpha+\varepsilon)$.

    \item[(b)] There exist $i, i' \in [n]$ with $c_i>0,c_{i'}>0$ and $j \in [m]$ such that the order between the value of prize $(i,j)$ for \calA and the fair cap of box $i'$ changes. Formally, $v_{ij}(\alpha) = \faircap{i'}(\alpha)$, and there exists $\varepsilon>0$ such that $v_{ij}(\alpha') \neq \faircap{i'}(\alpha')$ for all $\alpha'\in(\alpha-\varepsilon,\alpha)$ or for all $\alpha'\in(\alpha,\alpha+\varepsilon)$.

    \item[(c)] There exists $i\in[n]$ with $c_i>0$ such that $\faircap{i}(\alpha)=0$.
\end{enumerate}
Observe that the critical values indeed have the property that, for any two such values $\alpha^c_1,\alpha^c_2$, the set of \calA-optimal policies is constant within $(\alpha^c_1,\alpha^c_2)$.
Note that any \calA-optimal policy is also \calA-optimal under both $\alpha^c_1$ and $\alpha^c_2$, but there are potentially additional \calA-optimal policies under $\alpha^c_1$ or under $\alpha^c_2$. Together with Observation~\ref{obs:utility-principal}, this implies that the optimal contract \emph{is} a critical value.

The following auxiliary lemma supports the analysis of the number of critical values.

\begin{lemma}
    \label{lem:opt-linear:faircap-piecewise-linear}
    For each $i \in [n]$ with $c_i > 0$, the fair cap $\faircap{i} : [0, 1] \to \mathbb R$ is a monotone convex piece-wise linear function with at most $2m+1$ linear segments.
\end{lemma}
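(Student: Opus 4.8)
The plan is to read the fair-cap equation as describing a level set of a jointly convex function. Fix a box $i$ with $c_i>0$ and set $g_i(\varphi,\alpha)=\sum_{j\in[m]}p_{ij}\max\{0,\,a_{ij}+\alpha b_{ij}-\varphi\}$, so that $\faircap{i}(\alpha)$ is, by definition, the value $\varphi$ with $g_i(\varphi,\alpha)=c_i$. First I would record that for fixed $\alpha$ the map $g_i(\cdot,\alpha)$ is continuous, non-increasing, and strictly decreasing while positive (it vanishes once $\varphi\ge\max_j v_{ij}(\alpha)$, and $c_i>0$); hence $\faircap{i}(\alpha)$ is unique and $\{\varphi:g_i(\varphi,\alpha)\le c_i\}=[\faircap{i}(\alpha),\infty)$. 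Monotonicity of $\faircap{i}$ then follows immediately: since $b_{ij}\ge 0$, each $v_{ij}(\alpha)$ is non-decreasing in $\alpha$, so $g_i(\varphi,\cdot)$ is non-decreasing for every fixed $\varphi$; for $\alpha\le\alpha'$ this gives $g_i(\faircap{i}(\alpha'),\alpha)\le c_i$, and the level-set characterization yields $\faircap{i}(\alpha)\le\faircap{i}(\alpha')$.

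Next I would establish convexity. Each function $(\varphi,\alpha)\mapsto\max\{0,\,a_{ij}+\alpha b_{ij}-\varphi\}$ is a pointwise maximum of two affine functions, hence jointly convex, so $g_i$ is jointly convex in $(\varphi,\alpha)$ as a non-negative combination, and its sublevel set $\{(\varphi,\alpha)\in\mathbb R\times[0,1]:g_i(\varphi,\alpha)\le c_i\}$ is convex. By the level-set description from the previous step, this set is exactly the epigraph of $\faircap{i}$ over the domain $[0,1]$, so $\faircap{i}$ is convex on $[0,1]$.

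To obtain piecewise linearity with the claimed segment count, I would track the ``active set'' $S(\alpha)=\{j\in[m]:v_{ij}(\alpha)>\faircap{i}(\alpha)\}$. On any interval on which $S(\alpha)\equiv S$ is constant, the fair-cap equation reads $\sum_{j\in S}p_{ij}(v_{ij}(\alpha)-\faircap{i}(\alpha))=c_i$, which solves to $\faircap{i}(\alpha)=\big(\sum_{j\in S}p_{ij}v_{ij}(\alpha)-c_i\big)\big/\sum_{j\in S}p_{ij}$, an affine function of $\alpha$ (the denominator is positive, since $c_i>0$ forces $S$ to contain a prize of positive probability). Hence $\faircap{i}$ is piecewise linear, and each breakpoint occurs at an $\alpha$ where some prize enters or leaves $S$, i.e., where $v_{ij}(\alpha)=\faircap{i}(\alpha)$ with the membership of $j$ flipping. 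Now convexity pays off: for each $j$ the difference $d_j(\alpha)=v_{ij}(\alpha)-\faircap{i}(\alpha)$ is affine minus convex, hence concave on $[0,1]$, so $\{\alpha\in[0,1]:d_j(\alpha)>0\}$ is a subinterval of $[0,1]$; thus prize $j$ contributes at most two breakpoints (one upon entering $S$, one upon leaving). Summing over the $m$ prizes gives at most $2m$ breakpoints in $(0,1)$, hence at most $2m+1$ linear segments.

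The step I expect to require the most care is the segment count: a priori a prize could oscillate in and out of the active set arbitrarily often, and ruling this out hinges entirely on first having proved that $\faircap{i}$ is convex (making $v_{ij}-\faircap{i}$ concave and its positivity region an interval). Establishing the convexity itself is short once one views the fair cap as a level set of the jointly convex $g_i$; the remaining ingredients---the level-set characterization and the affine formula on an interval of constant active set---are routine verifications.
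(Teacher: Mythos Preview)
Your argument is correct and reaches the same conclusion as the paper, but the route for convexity is genuinely different. The paper argues directly at the breakpoints: it writes $\faircap{i}(\alpha)$ as a weighted average of the active $v_{ij}(\alpha)$ and then checks, in a two-case analysis, that when a prize enters or leaves the active set the slope of $\faircap{i}$ can only increase. You instead observe that $g_i(\varphi,\alpha)=\sum_j p_{ij}\max\{0,v_{ij}(\alpha)-\varphi\}$ is jointly convex, so its sublevel set $\{g_i\le c_i\}$ is convex, and you identify this set with the epigraph of $\faircap{i}$ via the monotone level-set characterization. This is cleaner and more robust: it avoids the somewhat informal ``weighted-average slope'' reasoning, it gives convexity without first knowing piecewise linearity, and it simultaneously yields monotonicity (which the paper states but does not explicitly derive in the proof). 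Conversely, the paper's local analysis makes the mechanism at each breakpoint very explicit, which matches the accompanying figure. For the segment bound the two proofs are essentially the same idea phrased differently: the paper says an affine $v_{ij}$ meets a convex $\faircap{i}$ at most twice, while you say $v_{ij}-\faircap{i}$ is concave so its positivity set is an interval; both give at most two active-set transitions per prize and hence at most $2m+1$ pieces.
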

\begin{proof}
    For any $\alpha \in [0, 1]$, it holds $\sum_{j \in [m]} p_{ij} \max\{0, v_{ij}(\alpha) - \faircap{i}(\alpha)\} = c_i$ by definition.
    An equivalent formulation would be
    \[
        \faircap{i}(\alpha) = \frac {\sum_{j \in S_i(\alpha)} p_{ij} v_{ij}(\alpha) - c_i} {\sum_{j \in S_i(\alpha)} p_{ij}},
    \]
    where $S_i(\alpha) := \{j \in [m] : v_{ij}(\alpha) \ge \faircap{i}(\alpha)\}$.
    Hence, $\faircap{i}(\alpha)$ is the weighted average (minus offset $c_i$) of all $v_{ij}(\alpha)$ that are in $S_i(\alpha)$, i.e., above $\faircap{i}(\alpha)$.
    For a fixed set $S_i(\alpha)$, this means that the fair cap $\faircap{i}$ is linear in $\alpha$, as every $v_{ij}$ is linear in $\alpha$.
    Therefore, the slope of $\faircap{i}$ only changes at intersections with some $v_{ij}$, where the set $S_i(\alpha)$ changes.
    We argue now that the fair cap can only increase at those intersections (cf.\ Fig.~\ref{fig:slope-of-faircap-only-increases}).
    
    For every $(i,j) \in [n] \times [m]$ and $\alpha \in (0, 1]$ with $v_{ij}(\alpha) = \faircap{i}(\alpha)$, but $v_{ij}(\alpha-\varepsilon) \neq \faircap{i}(\alpha-\varepsilon)$ for some $\varepsilon > 0$ , there are two cases:
    \begin{enumerate}
        \item $v_{ij}(\alpha-\varepsilon) > \faircap{i}(\alpha-\varepsilon)$.
        Then $j \in S_i(\alpha - \varepsilon)$, but $j \notin S_i(\alpha + \varepsilon')$ for some $\varepsilon' > 0$.
        Hence the slope of $\faircap{i}$ was greater than the slope of $v_{ij}$ before the intersection at $\alpha$.
        The new weighted average of affine functions that are above the fair cap can only increase when $v_{ij}$ is not contributing to that average anymore.
        \item $v_{ij}(\alpha-\varepsilon) < \faircap{i}(\alpha-\varepsilon)$.
        Then $j \notin S_i(\alpha - \varepsilon)$, but $j \in S_i(\alpha)$. 
        Hence the slope of $\faircap{i}$ was less than the slope of $v_{ij}$ before the intersection at $\alpha$.
        The new weighted average of affine functions that are above the fair cap can only increase when $v_{ij}$ is starting to contribute to that average.
    \end{enumerate}
    Thus the slope of $\faircap{i}(\alpha)$ is never decreasing when $\alpha$ increases, which makes $\faircap{i}$ convex.
    As $v_{ij}$ is an affine function, every $v_{ij}$ can intersect at most twice with $\faircap{i}$.
    Therefore, there are at most $2m+1$ linear segments of $\faircap{i}$ on the interval $[0, 1]$.
\end{proof}

    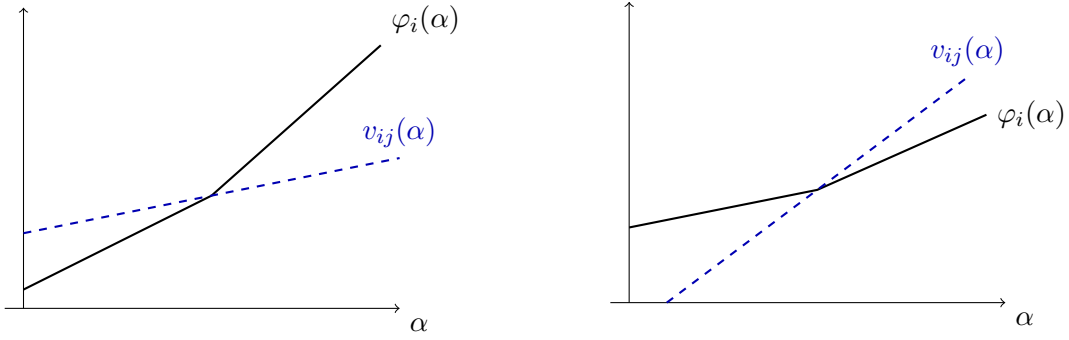
\begin{figure}[t]
		\centering
		\begin{minipage}{0.48\textwidth}
			\centering
			\begin{tikzpicture}
				\draw[->] (-0.25,0) -- (5,0) node[below right] {$\alpha$};
				\draw[->] (0,0) -- (0,4);
				
				\draw[-, thick] (0,0.25) -- (2.5,1.5) -- (4.75,3.5) node[above right] {$\faircap{i}(\alpha)$};
				\draw[dashed, thick, blue!70!black] (0,1) -- (5,2) node[above] {$v_{ij}(\alpha)$};
				
			\end{tikzpicture}
		\end{minipage}
		\begin{minipage}{0.48\textwidth}
			\centering
			\begin{tikzpicture}
				\draw[->] (-0.25,0) -- (5,0) node[below right] {$\alpha$};
				\draw[->] (0,0) -- (0,4);
				
				\draw[-, thick] (0,1) -- (2.5,1.5) -- (4.75,2.5) node[right] {$\faircap{i}(\alpha)$};
				\draw[dashed, thick, blue!70!black] (0.5,0) -- (4.5,3) node[above] {$v_{ij}(\alpha)$};
				
			\end{tikzpicture}

		\end{minipage}
		\caption{
        Two cases for an intersection between $v_{ij}$ and $\faircap{i}$.
        Left: The slope of $v_{ij}$ was less than slope of $\faircap{i}$ before the intersection.
        By definition, it contributed to the weighted average slope that defines the slope of $\faircap{i}$.
        After the intersection, it does not contribute anymore, and the weighted average only increases.
        Right: The slope of $v_{ij}$ was greater than slope of $\faircap{i}$ before the intersection.
        Symmetrical arguments.
        }
		\label{fig:slope-of-faircap-only-increases}
	\end{figure}
 
According to Lemma~\ref{lem:opt-linear:faircap-piecewise-linear}, there are at most $\calO(n)$ critical values induced by case (c). By the same lemma, there are at most $\calO(nm)$ critical values for every box $i \in [n]$ with $c_i > 0$ induced by case (b), as $\faircap{i}$ is convex and there are $nm$ affine functions $v_{i'j}$, each of which intersects $\faircap{i}$ at most twice. Thus, $\calO(n^2m)$ critical values in total are due to cases (b) and (c).
Similarly, each linear segment of $\faircap{i}$ can intersect with another convex function $\faircap{i'}$ at most twice.
As there are at $\calO(m)$ linear segments for $\alpha \in [0,1]$ and $n$ possible functions $\faircap{i'}$, there are at most $\calO(nm)$ such intersections for $\faircap{i}$.
Thus, case (a) induces at most $\calO(n^2m)$ critical values.
Overall, the number of critical values is polynomial in $n$ and $m$, and the set of them can be computed in polynomial time.

By Theorem~\ref{thm:best-order}, we can compute an optimal contract for any given critical value in polynomial time. The expected utility of \calP under this contract can also be computed in polynomial time.
Thus, we can enumerate all critical values and take the best contract with respect to the expected utility for $\calP$, implying Theorem~\ref{thm:linear}.

\section{Optimal General Contracts}
\label{sec:general}

\subsection{No Intrinsic Agent Value}
\label{sec:noValue}

In the consideration of contract problems, it is often assumed that the agent has no intrinsic value (only cost) and receives benefit only via transfers from the principal. In this case, we have $a_{ij} = 0$ for all $(i,j) \in [n] \times [m]$. We show that under this assumption, we can compute an optimal contract in polynomial time.

\begin{theorem}\label{thm:no-intrinsic}
    Suppose that $a_{ij} = 0$ holds for all $(i,j) \in [n] \times [m]$. Then an optimal exploration contract can be computed in polynomial time.
\end{theorem}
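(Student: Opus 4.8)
The plan is to exhibit a single explicit contract and certify its optimality by a matching upper bound; the relevant benchmark is the Pandora's Box instance that \calP faces \emph{on her own}, i.e., with prize values $b_{ij}$ and opening costs $c_i$. Write $U^\calP$ for its optimal value, let $\faircapP{i}$ be its fair caps (computable in polynomial time by Weitzman's rule), and let $\pi^\calP$ be the corresponding \calP-optimal index policy; for the main argument assume $\faircapP{i}\ge 0$, the remaining boxes being never opened by $\pi^\calP$ and handled by setting all their transfers to $0$. First I would show that no contract can give \calP more than $U^\calP$: under any contract $T$, if \calA plays an \calA-optimal policy $\pi$ that probes the random set $S$ and selects prize $(i^*,j^*)$, then the realized utilities of \calP and \calA sum to $(b_{i^*j^*}-t_{i^*j^*})+(t_{i^*j^*}-\sum_{i\in S}c_i)=b_{i^*j^*}-\sum_{i\in S}c_i$, whose expectation is the value of $\pi$ in \calP's own instance and hence at most $U^\calP$; since \calA secures nonnegative utility by opening nothing, his expected utility is at least $0$, so \calP gets at most $U^\calP$.

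\textbf{The contract and the structure it induces.}
Next I would analyze the contract $T^*$ given by $t^*_{ij}:=\max\{0,\,b_{ij}-\faircapP{i}\}$, which satisfies $0\le t^*_{ij}\le b_{ij}$. Two facts drive everything. First, the utility that \calP collects for a selected prize becomes $b_{ij}-t^*_{ij}=\min\{b_{ij},\faircapP{i}\}$, i.e., exactly the \emph{capped value} of $(i,j)$ in \calP's own instance; ``stripping off the part above \calP's cap'' is the point of the definition, in the spirit of \cite{KleinbergWW16}. Second, substituting $t^*_{ij}$ into \calA's fair-cap equation (with $a_{ij}=0$) and invoking the defining equation of $\faircapP{i}$ shows $\faircapA{i}=0$ for every box with $c_i>0$, so all of \calA's fair caps equal $0$. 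Weitzman's characterization then yields that the \calA-optimal policies under $T^*$ are exactly those that probe boxes in any adaptively chosen order, are \emph{forced} to stop once a prize with $t^*_{ij}>0$ is held, may freely stop or continue while only $t^*=0$ prizes have been seen, and finally take a seen prize of maximum $t^*$-value (ties toward \calP).

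\textbf{Tie-breaking recovers the benchmark.}
The heart of the argument is that $\pi^\calP$ -- open boxes in non-increasing order of $\faircapP{i}$ and stop as soon as the best capped value seen so far reaches the next fair cap -- lies in this family, so \calA's \calP-favorable tie-breaking gives \calP at least the utility of $\pi^\calP$. Its opening order is permissible because all \calA-fair caps tie at $0$. Whenever \calA is \emph{forced} to stop -- precisely when the held prize $(i,j)$ has $t^*_{ij}>0$, i.e.\ $b_{ij}>\faircapP{i}$, so its capped value equals $\faircapP{i}$, which is at least the next fair cap -- the policy $\pi^\calP$ stops as well, and it selects $(i,j)$: having reached box $i$ without stopping means the best capped value seen so far is below $\faircapP{i}$ (that is the non-stopping condition at the preceding box), while all earlier seen prizes have $t^*=0$ and hence capped value equal to their $b$-value, so $(i,j)$ is the capped-value maximizer -- which is also \calA's forced choice, $(i,j)$ being the unique seen prize with positive $t^*$. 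On a voluntary stop (only $t^*=0$ prizes seen) the \calP-favorable tie-break again picks the largest-$b$, hence largest-capped-value, prize. Thus, executing $\pi^\calP$ under $T^*$, \calP's utility from the selected prize equals its capped value and the selected prize coincides with the one $\pi^\calP$ picks in \calP's own instance; so \calP's expected utility equals the value of $\pi^\calP$ there, namely $U^\calP$. With the upper bound this is exactly $U^\calP$, and $T^*$ is clearly polynomial-time computable.

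\textbf{Main obstacle.}
The step I expect to be the main obstacle is the last one: because \calP can steer \calA only \emph{within} the set of \calA-optimal policies, one must verify that every configuration in which \calA has no freedom (a positive-$t^*$ prize in hand) is also one in which \calP's own optimum would stop with exactly that prize -- and it is the non-increasing-cap structure of Weitzman's policy that guarantees this. A secondary point needing care is the treatment of boxes with $c_i=0$ (where \calA's fair cap is non-unique and \calP fixes it in her favor) and of boxes with $\faircapP{i}<0$ (never opened, transfers $0$, so \calA's fair cap is negative too).
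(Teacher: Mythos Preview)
Your proposal is correct and takes essentially the same approach as the paper: the identical contract $t^*_{ij}=\max\{0,b_{ij}-\faircapP{i}\}$, the observation that all of \calA's fair caps collapse to $0$ so that \calA is fully indifferent except for forced stops at $t^*_{ij}>0$, and the verification that these forced stops coincide with stops of $\pi^\calP$. Your upper bound via the surplus identity and your treatment of edge cases are spelled out in more detail than in the paper; the only cosmetic remark is that your last step (\calP's expected utility under $T^*$ equals $U^\calP$) is most directly obtained by reusing that same surplus identity together with $\mathbb{E}[\text{\calA's utility}]=0$, rather than through the capped-value route.
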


Without any transfers from $\calP$, $\calA$ has no intrinsic motivation to open any box -- except the ones with inspection cost 0. Specifically, if the agent opens box $i \in [n]$, the (expected) payments from $\calP$ have to cover the inspection costs for $\calA$, i.e., it holds that $\sum_{j \in [m]} p_{ij} t_{ij} \ge c_i$. As a direct consequence, $\calP$ cannot obtain more utility from a contract with $\calA$ over the one obtained by exploring boxes and paying inspection costs by herself (she can simply imitate $\calA$'s behavior under the contract). 

\begin{definition}
    A given contract $T$ \emph{implements} a policy $\pi$ if (1) $\pi$ is optimal under $T$, and (2) $\sum_{j \in [m]} p_{ij} t_{ij} = c_i$ for all $i \in [n]$.
\end{definition}

As observed above, an optimal contract cannot yield more utility for $\calP$ than an optimal policy $\pi^*$ that she can apply by herself (paying the costs herself). Hence, the next lemma implies Theorem~\ref{thm:no-intrinsic}.

\begin{lemma}
    There exists a contract $T^*$ that implements $\pi^*$.
\end{lemma}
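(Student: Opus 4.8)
The plan is to construct the contract $T^*$ explicitly from the principal's own optimal policy $\pi^*$, and to verify the two conditions in the definition of "implements". Recall that $\pi^*$ is a \calP-optimal policy for the underlying Pandora's Box instance where the value of prize $(i,j)$ is $b_{ij}$ (the principal pays the costs herself); this is a Weitzman index policy with some fair caps $\faircapP{i}$ for boxes with $c_i>0$. Since $a_{ij}=0$, I would set the transfer for prize $(i,j)$ to be exactly its capped value from \calP's perspective, i.e.\ $t_{ij} := \min\{b_{ij},\faircapP{i}\}$ for boxes with $c_i>0$; for boxes with $c_i=0$ one can take $t_{ij}:=0$ (the index is nonnegative and such boxes behave consistently). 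The point of this choice is twofold. First, because $\sum_{j}p_{ij}\max\{0,b_{ij}-\faircapP{i}\}=c_i$, we get $\sum_j p_{ij}\min\{b_{ij},\faircapP{i}\} = \sum_j p_{ij}b_{ij} - c_i$; but that is not directly $c_i$, so the natural first attempt needs a correction. The fix is the one sketched in the text: transfer everything \emph{exceeding} $\faircapP{i}$, i.e.\ $t_{ij}:=\max\{0,\,b_{ij}-\faircapP{i}\}$ — possibly plus a uniform base amount — chosen so that the expected transfer from box $i$ equals exactly $c_i$. I would work with $t_{ij}:=\max\{0,b_{ij}-\faircapP{i}\}$; then by definition of the fair cap, $\sum_{j\in[m]} p_{ij}t_{ij}=c_i$, which is precisely condition (2).

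The second and main step is to show condition (1): that $\pi^*$ is an optimal policy under this contract $T^*$, i.e.\ $\pi^*$ is among the \calA-optimal (Weitzman index) policies and, among those, one that is best for \calP. Since $a_{ij}=0$, the agent's value for prize $(i,j)$ under $T^*$ is $t_{ij}=\max\{0,b_{ij}-\faircapP{i}\}$. I would compute the agent's fair cap $\faircapA{i}$ for each box with $c_i>0$: it is the value with $\sum_j p_{ij}\max\{0,\,t_{ij}-\faircapA{i}\}=c_i$. I claim $\faircapA{i}=0$. Indeed, plugging $\faircapA{i}=0$ gives $\sum_j p_{ij}\max\{0,t_{ij}\} = \sum_j p_{ij}t_{ij}=c_i$ (using $t_{ij}\ge 0$), so $0$ is a valid fair cap; and since $c_i>0$ it is the unique one. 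Thus under $T^*$ all boxes with positive cost have agent-fair-cap exactly $0$, the boxes with zero cost have nonnegative agent-fair-cap, and the agent is therefore \emph{indifferent}, at every step, between opening any further positive-cost box and stopping (the best value seen so far is always $\ge 0 = \faircapA{i}$ after the first prize is revealed; before any box is opened the current best is $0$ which ties every fair cap). Consequently, \emph{every} policy that only opens positive-cost boxes "early enough" and only stops at a prize with $t_{ij}=0$ never hurts — more carefully, I would argue that the order in which \calA opens positive-cost boxes and when she stops is entirely a matter of tie-breaking, so $\pi^*$ — viewed as an opening/stopping policy on the boxes — is one admissible resolution. Then, since ties are broken in favor of \calP and $\pi^*$ is by assumption the best policy for \calP among \emph{all} policies, it is a fortiori the best among the (sub)set of \calA-optimal policies; hence $\pi^*$ is the optimal policy under $T^*$.

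Two technical points deserve care, and I expect the indifference/tie-breaking bookkeeping to be the main obstacle. (a) One must make sure $\pi^*$, which is defined as a policy for the principal's instance, is syntactically a legal policy in the agent's instance and is actually selected under the stated tie-breaking rule — this is where the argument "all positive-cost fair caps are $0$, so any opening order and any stopping at a zero-transfer prize is \calA-optimal" is used, together with the observation that $\pi^*$ indeed only stops at prizes that are optimal for \calP (and such a prize, having the maximum $b_{ij}$ along that branch, corresponds to a consistent choice). (b) Zero-cost boxes: their agent index need not be $0$, but since $\pi^*$ is \calP-optimal it already treats them optimally, and one checks that setting $t_{ij}=0$ on them (or, if needed, $t_{ij}=\max\{0,b_{ij}-\faircapP i\}$ with any choice of $\faircapP i\ge 0$ consistent with $\pi^*$) keeps $\pi^*$ \calA-optimal; I would fold this into the same indifference argument. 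Finally, combining condition (1), the fact that $\pi^*$ is \calP-optimal, and the a-priori bound that no contract can give \calP more than the value of $\pi^*$ (she can always imitate \calA and pay costs herself), $T^*$ is an optimal contract, which is what Theorem~\ref{thm:no-intrinsic} asserts.
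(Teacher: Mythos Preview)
Your proposal is correct and matches the paper's proof: set $t_{ij}=\max\{0,b_{ij}-\faircapP{i}\}$, whence condition~(2) is the defining equation of $\faircapP{i}$, all agent fair caps equal $0$, and $\pi^*$ is then selected by tie-breaking in favor of \calP. The only point to sharpen (which the paper makes explicit and you flag in (a)) is that stopping is \emph{not} entirely a tie---once a prize with $t_{ij}>0$ is observed the agent \emph{must} stop, but this happens precisely when $b_{ij}>\faircapP{i}$, which is exactly when $\pi^*$ stops as well.
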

\begin{proof}
    Recall that the following is an optimal policy $\pi^*$ for $\calP$ (when she pays the costs herself). Define fair caps $\faircapP{i}$ for each box $i \in [n]$ such that $\sum_{j \in [m]} p_{ij} \max\{0, b_{ij} - \faircapP{i}\} = c_i$. Then $\calP$ opens boxes in non-increasing order of their fair caps as long as the best prize observed so far does not exceed the fair cap of the next box.

A contract $T^*$ that implements $\pi^*$ is given by payments $t_{ij} := \max\left\{0, b_{ij} - \faircapP{i}\right\}$ for all $(i,j) \in [n] \times [m]$.
Clearly, we have
\begin{equation}
    \label{eq:payments-equal-cost}
    \sum_{j \in [m]} p_{ij} t_{ij} = c_i, \hspace{1cm} \text{for all $i \in [n]$.}
\end{equation}
Consequently, $\calA$ is faced with an instance of the Pandora's Box problem where her expected prize in each box $i$ equals the opening cost. Thus, if $\calA$ applies the (\calA-optimal) index policy for the emerging instance, the fair caps for $\calA$ are given by $\faircapA{i} = 0$ for all $i \in [n]$.

This means that $\calA$ is entirely indifferent about the opening order. As long as there are only prizes with $t_{ij} = 0$, $\calA$ is also indifferent about stopping to open further boxes and selecting any previously observed prize at any point in time. Thus, $\calA$'s behavior under these conditions can be assumed to be consistent with the one imposed by $\pi^*$. 

Restrictions only arise from the fact that $\calA$ \emph{must stop immediately} whenever a prize $j$ with $t_{ij} > 0$ is drawn from box $i$, according to $\calA$'s index policy. However, this is consistent with the behavior of the index policy $\pi^*$ for $\calP$: Note that $t_{ij} > 0$ if and only if $b_{ij} > \faircapP{i}$. Hence, $\calP$ also stops when prize $j$ is drawn from box $i$, because $b_{ij}$ exceeds the fair cap of box $i$ and, thus, the fair cap of the next box in the order.

Therefore, $\pi^*$ is an optimal policy under contract $T^*$. Together with~\eqref{eq:payments-equal-cost} this shows that $T^*$ implements $\pi^*$. 
\end{proof}

\subsection{Binary Boxes}
\label{sec:binary}

We study a subclass of the problem where every box $i \in [n]$ contains two prizes with positive probability.
One of those prizes has value 0 for both $\calP$ and $\calA$, and is called \emph{0-prize}.
The other prize (the \emph{positive prize}) is denoted by the value pair $(a_i, b_i)$ with $a_i, b_i \ge 0$ and has probability $p_i \in (0,1]$.
Consequently, the 0-prize is drawn from box $i$ with probability $1 - p_i$.
Note that there cannot be a positive payment for the 0-prize. 
Thus, payments for box $i$ are given by a single $t_i \ge 0$, the payment for the positive prize.

Depending on a payment of $t \ge 0$, we define the fair cap $\faircap{i}(t)$ of box $i$ for $\calA$ by
\begin{equation}
    \label{eq:fairCap}
    \faircap{i}(t) = t + a_i - \frac {c_i} {p_i}.
\end{equation}

\begin{lemma}\label{lem:binary-boxes:retrieve-policy-from-contract}
    Given a contract $T$, there is an optimal policy that has the following properties. \calA opens boxes in order of the fair cap defined by \eqref{eq:fairCap}. Boxes with the same fair cap are ordered according to the value $b_i - t_i$. \calA accepts the first box with a positive prize (if any).
\end{lemma}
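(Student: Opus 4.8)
The plan is to exhibit the policy in the statement---write $\pi_T$ for it---and to show that it is both \calA-optimal and \calP-optimal among the \calA-optimal policies; by the definition of an optimal policy this makes $\pi_T$ optimal under $T$. Throughout I use the fair caps from \eqref{eq:fairCap}. For a box with $c_i = 0$ this is only the smallest admissible fair cap, but every admissible choice leaves the capped value $\min\{a_i + t_i, \faircap{i}(t_i)\} = \faircap{i}(t_i)$ of box $i$'s positive prize unchanged and, since a larger fair cap can only move box $i$ earlier, does not affect the value comparisons governing the agent's stopping; hence this restriction does not decrease the maximum expected \calP-utility over \calA-optimal policies, and it suffices to compare $\pi_T$ against index policies using the caps \eqref{eq:fairCap}.

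I would first record the structure of \calA-optimal policies in the binary case. Such a policy opens boxes in non-increasing order of the caps \eqref{eq:fairCap}, and the only remaining freedoms are the order within a fair-cap \emph{level} (a maximal set of boxes of equal fair cap), the open-or-stop choice when the best value $v$ seen so far equals the fair cap of the next box, and which revealed prize of maximal agent value to select. Two facts drive the argument: the moment a positive prize of some box $i$ (necessarily with $\faircap{i} \ge 0$) is revealed, $v = a_i + t_i \ge \faircap{i}$ is at least the fair cap of every not-yet-opened box, so stopping is \calA-optimal; and the positive prize of box $i$ has capped value $\faircap{i}$, while a $0$-prize revealed from an opened box has capped value $0$. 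Hence all positive prizes within one level share that level's fair cap as capped value, so the agent breaks ties among them toward the larger $b_i - t_i$; and if box $i$ is the \emph{first} box to reveal a positive prize, its prize is a maximal-capped-value prize among the opened boxes. From this, $\pi_T$ is \calA-optimal: while no positive prize has appeared, $v = 0$ does not exceed the next fair cap (boxes with negative fair cap are skipped), so opening is admissible; and upon the first revealed positive prize, of a box $i$, stopping is admissible and selecting $i$'s positive prize is admissible, any ties being only with $0$-prizes of \calP-value $0 \le b_i - t_i$.

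For \calP-optimality I would show that every \calA-optimal policy $\pi$ can be rewritten into $\pi_T$ by three modifications, each preserving \calA-optimality and not decreasing \calP's expected utility. \emph{Step 1 (sort each level by $b_i - t_i$):} repeatedly swap two boxes $i, i'$ that $\pi$ opens consecutively inside one level with $b_i - t_i < b_{i'} - t_{i'}$; a case analysis over the four joint realizations of the prizes in $i$ and $i'$---using that realized positive prizes of a level all have that level's fair cap as capped value, that the agent then keeps the one best for \calP, and that a forced stop occurs only after a positive prize has shown---shows that \calP's utility weakly increases. \emph{Step 2 (stop at the first revealed positive prize):} after sorting, once box $i$'s positive prize appears the agent can continue only within level $\faircap{i}$, where every further box has $b - t \le b_i - t_i$ and $i$'s prize is already in hand with the level's capped value, so the prize eventually selected is unchanged; stopping is \calA-optimal by the observation above. \emph{Step 3 (open every box with $\faircap{i} \ge 0$):} if the already-modified policy ever reaches a box $i$ with $\faircap{i} = 0$ and $v = 0$ without opening it, let it open $i$ (selecting $i$'s positive prize and stopping if it shows, continuing otherwise); because the levels are sorted, the continuation the policy would otherwise follow visits only boxes with $b - t \le b_i - t_i$, so the value it forgoes on the positive-prize branch is at most $b_i - t_i$, and since $t_i \le b_i$ this does not hurt \calP; opening a fair-cap-$0$ box at $v = 0$ is \calA-optimal. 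After these three steps the policy is exactly $\pi_T$, so $\pi_T$ is \calP-optimal among \calA-optimal policies, completing the proof.

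The step I expect to be the main obstacle is Step 1, together with making the whole chain precise: Step 1 must be carried out first (Steps 2 and 3 both rely on ``later boxes within a level have smaller $b - t$''), each modification must land in the indifference case of Weitzman's rule so that \calA-optimality is preserved, and Step 1 itself requires somewhat delicate bookkeeping over the four joint realizations---in particular the interaction between boxes with $c_i = 0$, where the agent is indifferent and may continue, and boxes with $c_i > 0$, where a revealed positive prize forces an immediate stop. The hypothesis $t_{ij} \le b_{ij}$ enters only in Step 3.
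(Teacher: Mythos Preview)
Your proposal is correct and follows essentially the same approach as the paper: both establish \calA-optimality of the stated policy directly from the index-policy characterization, then argue \calP-optimality by reducing any \calA-optimal policy to the stated one without decreasing \calP's utility. Your three-step transformation (sort levels by $b_i - t_i$, stop at first positive prize, open all boxes with nonnegative fair cap) makes explicit what the paper asserts more informally; the paper treats the $c_i=0$ case up front by a deferral argument, while you fold it into the uniform transformation and flag the $c_i=0$ versus $c_i>0$ interaction as the delicate point---both organizations are fine.
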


\begin{proof}
As discussed in Section~\ref{sec:prelims}, the fair cap for boxes is unique whenever $c_i > 0$. For boxes with $c_i = 0$ we also assume that the fair cap is given as in \eqref{eq:fairCap}. This is the smallest fair cap for this box and defers the inspection of box $i$ to the latest point. Clearly, the agent is indifferent regarding this choice. Consequently, it is $\calA$-optimal to stop immediately and accept as soon as $\calA$ opens a box with a positive prize in it, since the value is $a_i + t_i \ge \faircap{i}(t_i) \ge \faircap{i+1}(t_{i+1})$. 

To argue that this behavior is also $\calP$-optimal, suppose some box $i^*$ with $c_{i^*} = 0$ gets a higher fair cap $\faircap{i^*}(t_{i^*}) > a_{i^*} + t_{i^*}$ and gets opened earlier (in accordance with the resulting index policy of \calA). With such fair cap for $i^*$, $\calA$ does not stop until all subsequent boxes $i$ with fair cap $\faircap{i}(t_{i}) > a_{i^*} + t_{i^*}$ are opened. As such, we can defer the opening of box $i^*$ until after these boxes are opened. Now if there are several boxes with the same fair caps, it is optimal for \calP if the opening is done in non-increasing order of $b_i - t_i$. This guarantees that there is no subsequent box with the same fair cap and a prize that has a better value for $\calP$. Conversely, any box with the same fair cap and potentially better value for $\calP$ has been inspected before. As such, using minimal fair caps for boxes with $c_i=0$, breaking ties w.r.t.\ $b_i-t_i$, and stopping as soon as a positive prize is observed is also $\calP$-optimal.
\end{proof}

Let us define a \emph{basic} fair cap of box $i$ to be $\faircap{i}(0)$. A box with negative basic fair cap would not be considered by $\calA$ unless the payment $t_i$ is at least $c_i/p_i - a_i$, in which case the fair cap would be precisely $0$. Note that boxes with prohibitively high cost $p_i(a_i + b_i) \le c_i$ are w.l.o.g.\ never opened by $\calA$ under any contract $T$. In what follows, we exclude such boxes from consideration. For the remaining boxes $p_i (a_i + b_i) > c_i$, and, hence, $b_i > c_i/p_i - a_i =: \tilde t_i$. Whenever $\tilde t_i > 0$, the exploration cost exceeds the expected value for \calA in box $i$. As such, a transfer of $\tilde t_i$ is clearly necessary to motivate \calA to inspect box $i$ at all. 
We renormalize the box by adjusting the value of the positive prize to $(a_i + \tilde t_i, b_i - \tilde t_i)$. Then the basic fair cap becomes $\faircap{i}(0) = 0$.
Indeed, assuming $t_i\geq \tilde t_i$ is w.l.o.g.: Consider any contract $T$ in which $t_i<\tilde t_i$ for all $i$ in some non-empty set of boxes $B$, and the optimal policy $\pi$ under $T$. Note that choosing a contract $T'$ where $t_i$ is increased to $\tilde t_i$ for such boxes does not hurt: This will increase the fair caps in $B$ to 0. An \calA-optimal policy $\pi'$ for $T'$ can be obtained from $\pi$ by opening the boxes $B$ in the end if no prize has been found yet, in any order. The resulting utility for \calP in such cases is non-negative rather than 0 (as $\tilde{t}_i \leq b_i$); the utility in all other cases does not change.

In the remainder of the section, we consider the instance with renormalized boxes, i.e., each box $i$ has a basic fair cap $\faircap{i}(0) \ge 0$. Furthermore, we re-number the boxes and assume that the indices are assigned in non-increasing order of basic fair caps, i.e. $i < j \implies \faircap{i}(0) \ge \faircap{j}(0)$ for all $i, j \in [n]$. We break ties in the ordering w.r.t.\ non-increasing value of $b_i$.

Our next observation will allow us to restrict attention to the permutation and the fair caps in the policy of $\calA$.
By~\eqref{eq:fairCap}, fair caps are linear in payments. Clearly, for an optimal contract we strive to set smallest payments (and, hence, smallest fair caps) to induce a behavior of $\calA$.
For any given contract $T$, we say $T$ \emph{implements an ordering $\sigma$ of boxes} if there is an optimal policy under $T$ that considers boxes in the order of $\sigma$. For the reverse direction, we need a slightly more technical definition.

\begin{definition}\label{def:binary-boxes:pointwise-smallest-payments-contract}
For any given ordering $\sigma$ of boxes, we define a contract $T(\sigma)$ as follows:
For each position $i$ (occupied with box $\sigma(i)$), let $j = \max( \arg \max_{j' \in \{i, i+1, \dots, n\}} \{\faircap{\sigma(j')}(0)\})$ be the latest position of any subsequent box (including $\sigma(i)$ itself) with the highest basic fair cap of the subsequent boxes. In $T(\sigma)$, we assign payments to the unique positive prizes for the boxes on positions $i, i+1, \dots, j$ such that they all have a fair cap of exactly $\faircap{\sigma(j)}(0)$. If $T(\sigma)$ has feasible payments and implements $\sigma$, we call $T(\sigma)$ the \emph{canonical contract} for $\sigma$.
\end{definition}

We have the following lemma concerning whether $T(\sigma)$ is the canonical contract for $\sigma$.

\begin{lemma}\label{lem:binary-boxes:opt-order-implies-contract}
For any ordering $\sigma$, we can decide in polynomial time if $T(\sigma)$ is the canonical contract for $\sigma$ (and compute it in this case).
If $T(\sigma)$ is not the canonical contract for $\sigma$, then every contract $T$ that implements $\sigma$ is suboptimal for the given instance.
Otherwise, the canonical contract has point-wise minimal payments among all contracts that implement $\sigma$.
\end{lemma}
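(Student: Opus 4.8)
The plan is to prove three things: (a) point-wise minimality of the payments of $T(\sigma)$ among all contracts that implement $\sigma$ — this is the last sentence of the statement, and it also disposes of the case where $T(\sigma)$ has an infeasible payment; (b) a polynomial-time test for whether $T(\sigma)$ is canonical; and (c) if $T(\sigma)$ is feasible but fails to implement $\sigma$, then every contract implementing $\sigma$ is suboptimal. Throughout I work in the renormalized instance and write $u(T)$ for the expected utility of \calP under an optimal policy for a contract $T$. Set $\Phi_k := \max_{l \ge k} \faircap{\sigma(l)}(0)$, the right-to-left running maximum of basic fair caps along $\sigma$; by Definition~\ref{def:binary-boxes:pointwise-smallest-payments-contract}, $T(\sigma)$ assigns to box $\sigma(k)$ the payment $t^0_{\sigma(k)} = \Phi_k - \faircap{\sigma(k)}(0) \ge 0$, so that $\faircap{\sigma(k)}(t^0_{\sigma(k)}) = \Phi_k$. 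For (a): if $T$ implements $\sigma$, some optimal (hence $\calA$-optimal) policy opens boxes in the order $\sigma(1), \dots, \sigma(n)$, so the fair caps are non-increasing along $\sigma$; using $\faircap{\sigma(k)}(t) = t + \faircap{\sigma(k)}(0)$ and $t \ge 0$ this forces $\faircap{\sigma(k)}(t_{\sigma(k)}) \ge \max_{l \ge k} \faircap{\sigma(l)}(0) = \Phi_k$, i.e.\ $t_{\sigma(k)} \ge t^0_{\sigma(k)}$ for every $k$. This is exactly point-wise minimality, and if $t^0_{\sigma(k)} > b_{\sigma(k)}$ for some $k$, then no feasible contract implements $\sigma$ and the second claim of the lemma holds vacuously.

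For (b): the values $\Phi_k$ and the payments $t^0_{\sigma(k)}$ are computed by a single scan over $\sigma$, and feasibility is the check $t^0_{\sigma(k)} \le b_{\sigma(k)}$ for all $k$. Whether the feasible contract $T(\sigma)$ implements $\sigma$ is decided by computing an optimal policy under $T(\sigma)$ via Theorem~\ref{thm:best-order} and comparing its opening order with $\sigma$. The only subtlety is that several optimal policies may exist, but this can happen only among positions sharing the same value $\Phi_k$ (where \calA is indifferent about the opening order), so it suffices to additionally test whether the explicit policy ``open in the order $\sigma$, accept the first positive prize'' attains $u(T(\sigma))$; equivalently, by Lemma~\ref{lem:binary-boxes:retrieve-policy-from-contract}, whether within each maximal block of equal $\Phi_k$ the ordering $\sigma$ lists the boxes in non-increasing order of $b_{\sigma(k)} - t^0_{\sigma(k)}$. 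That policy is $\calA$-optimal under $T(\sigma)$ because a positive prize of box $\sigma(k)$ has value $a_{\sigma(k)} + t^0_{\sigma(k)} = \Phi_k + c_{\sigma(k)}/p_{\sigma(k)} \ge \Phi_k \ge \Phi_l$ for every $l \ge k$, so \calA is always (weakly) allowed to stop on it.

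For (c): let $T(\sigma)$ be feasible but not implement $\sigma$, and let $T$ be any contract implementing $\sigma$ via an optimal policy $\pi_T$ that opens in the order $\sigma$, so $u(T)$ equals the expected \calP-utility of $\pi_T$ run under $T$. Define a policy $\pi'$ under $T(\sigma)$ that opens in the order $\sigma$ and copies the stop/continue decisions of $\pi_T$, except that it stops whenever $\pi_T$ would continue but continuing is not $\calA$-optimal under $T(\sigma)$ — which, by the value bound above, can only force an earlier stop, and only at a positive prize of a zero-cost box currently last in its $\Phi$-block; hence $\pi'$ is $\calA$-optimal under $T(\sigma)$. The heart of the argument is the inequality ``expected \calP-utility of $\pi'$ under $T(\sigma)$'' $\ge u(T)$: whenever $\pi'$ is forced to stop early at the positive prize of $\sigma(k)$ it collects $b_{\sigma(k)} - t^0_{\sigma(k)}$, whereas the continuation of $\pi_T$ cannot leave the $T$-fair-cap block of $\sigma(k)$ before stopping, so it eventually collects $b_{\sigma(l)} - t_{\sigma(l)}$ for some $l \ge k$ in that block; since $\pi_T$ is \calP-optimal, the block is ordered by non-increasing $b_{\sigma(\cdot)} - t_{\sigma(\cdot)}$, whence $b_{\sigma(l)} - t_{\sigma(l)} \le b_{\sigma(k)} - t_{\sigma(k)} \le b_{\sigma(k)} - t^0_{\sigma(k)}$ using point-wise minimality. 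So each forced early stop is weakly beneficial, and an induction over these stops gives the inequality. Finally, because $T(\sigma)$ does not implement $\sigma$, no $\calA$-optimal policy opening in the order $\sigma$ — in particular not $\pi'$ — is \calP-optimal under $T(\sigma)$, so its \calP-utility is strictly below $u(T(\sigma))$; hence $u(T) < u(T(\sigma))$, and $T$ is suboptimal.

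The main obstacle is this last inequality: showing that replacing the payments by the point-wise minimal ones of $T(\sigma)$, which forces \calA to commit earlier, never costs \calP. It rests on two structural facts — that once \calA passes a positive prize it stays inside the same fair-cap block until it stops, and that \calP-optimality pins down the non-increasing $b - t$ order within a block — both of which require care around zero-cost boxes, where \calA genuinely chooses when to stop and which of several equal-(agent-)value prizes to accept.
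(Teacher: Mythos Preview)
Your parts (a) and (b) are correct and, in fact, more explicit than the paper's own argument. The overall architecture of part (c) --- construct an \calA-optimal policy $\pi'$ under $T(\sigma)$ with opening order $\sigma$, sandwich $u(T) \le \text{(\calP-utility of }\pi') < u(T(\sigma))$ --- is exactly the right idea and matches what the paper does.

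There is, however, a genuine gap in the forced-early-stop analysis. You assert that ``since $\pi_T$ is \calP-optimal, the [$T$-fair-cap] block is ordered by non-increasing $b_{\sigma(\cdot)} - t_{\sigma(\cdot)}$.'' That is false in general. Lemma~\ref{lem:binary-boxes:retrieve-policy-from-contract} only says \emph{some} optimal policy has this form, not every one. Concretely, take a $T$-block consisting of a zero-cost box $B$ followed by a positive-cost box $A$ with $b_B - t_B < b_A - t_A$. The policy ``open $B$, continue, open $A$, accept the best'' has order $(B,A)$ and achieves exactly the same \calP-utility as the Lemma~\ref{lem:binary-boxes:retrieve-policy-from-contract} policy with order $(A,B)$, so it is \calP-optimal --- yet the block is ordered \emph{increasingly} in $b-t$. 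In this situation your pathwise inequality can fail: when $B$ is last in its $\Phi$-block, $\pi'$ is forced to stop at $B$ collecting $b_B - t^0_B = b_B$, while $\pi_T$ continues to $A$ and (if $A$ is positive) collects $b_A - t_A$, which can exceed $b_B$.

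The fix is the one you almost reach in your final paragraph: invoke Lemma~\ref{lem:binary-boxes:retrieve-policy-from-contract} at the outset and \emph{choose} $\pi_T$ to be the accept-first-positive policy in order $\sigma$. Then $\pi_T$ never continues past a positive prize, so there are no forced early stops at all; $\pi'$ is literally the same open/stop sequence, and the comparison reduces to $b_{\sigma(k)} - t^0_{\sigma(k)} \ge b_{\sigma(k)} - t_{\sigma(k)}$ at the (common) first positive position $k$, which is immediate from point-wise minimality. This is precisely the content of the paper's terse sentence ``since tie-breaking was already done in favor of \calP, this contract $T$ clearly yields less utility than $T(\sigma')$.'' Once you make that choice of $\pi_T$, the entire discussion of $T$-blocks, forced stops, and block orderings becomes unnecessary.
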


\begin{proof}
    The construction of $T(\sigma)$ and the subsequent feasibility check described in Definition~\ref{def:binary-boxes:pointwise-smallest-payments-contract} can be done in polynomial time. 
    Note that if $T(\sigma)$ is the canonical contract for $\sigma$, it uses point-wise minimal payments for implementing $\sigma$:
    Having less payments for any box would directly contradict implementation of $\sigma$, since every optimal policy considers boxes in the order of their fair caps, which are lower bounded by the respective basic fair caps.
    Thus, if $T(\sigma)$ is not the canonical contract for $\sigma$ due to infeasible payments, then implementing $\sigma$ is impossible with any contract, because even higher payments would be necessary.
    If $T(\sigma)$ is not the canonical contract for $\sigma$, but the construction in Definition~\ref{def:binary-boxes:pointwise-smallest-payments-contract} defines feasible payments, it defines a canonical contract $T(\sigma')$ of some other ordering $\sigma' \neq \sigma$ using point-wise minimal payments.
    Note that $\sigma'$ instead of $\sigma$ only emerges because of tie-breaking in favor of \calP, which implies that boxes with identical fair caps have to be considered in non-increasing order of $b_i - t_i$.
    Thus it could be possible to implement $\sigma$ with another contract $T$, but additional payments would be required to remove (some of the) ties.
    However, since tie-breaking was already done in favor of \calP, this contract $T$ clearly yields less utility than $T(\sigma')$ and hence is not an optimal contract for the given instance.
\end{proof}

In the following, we will discuss how to compute the fair caps for the policy of $\calA$ in an optimal contract. 
By Lemma~\ref{lem:binary-boxes:opt-order-implies-contract}, the order $\sigma$ that we compute in this way has to be implemented by $T(\sigma)$.

Let $e_{\calP}(\faircap{1}, \dots, \faircap{n})$ denote the expected utility for $\calP$ under a contract given by fair caps $\faircap{1}, \dots, \faircap{n}$. Algorithm~\ref{alg:binary-boxes} computes optimal fair caps for all boxes. Initially, the fair cap of each box is given by its basic fair cap. Then we calculate the optimal fair cap of boxes $1, \dots, n$ iteratively. In iteration $k$ of the outer for-loop, we find the optimal fair cap for box $k$ among the fair caps of previous boxes. Crucially, it is sufficient to test which of those fair caps yield maximum utility, even if the final fair caps of subsequent boxes are not determined yet. Note that by~\eqref{eq:fairCap}, to raise the fair cap of any box $i \in [n]$ to $\varphi > \faircap{i}(0)$, the required payment $t_i$ is a constant independent of the probabilities and values in box $i$ -- more precisely, it is exactly $t_i = \varphi - \faircap{i}(0)$.
\begin{algorithm}[t]
    \caption{Optimal Contract for Binary Boxes.}
    \label{alg:binary-boxes}
    \KwData{Binary boxes $1, \dots, n$ (non-increasing $\faircap{i}(0)$, tie-break: non-increasing $b_i$)}
    \KwResult{Contract $(t_1, \dots, t_n)$}

    $x_k \gets \faircap{k}(0)$ for all $k \in [n]$ \tcp*{Initialize with basic fair caps.}

    \For{$k = 1, \ldots, n$}{
        \For{$j = 1, \ldots, k$} {
             \tcc{Compute expected utility for $\calP$ when box $k$ had fair cap $x_j$}
            $\rho \gets x_j$ \;
            $e_j \gets e_{\calP}(x_1, \dots, x_{k-1}, \rho, x_{k+1}, \dots, x_n)$ \;
        }
        $j^* \gets \arg \max_{j \in \{1, \dots, k\}} e_j$ \tcp*{Box $k$ receives optimal fair cap $x_{j^*}$.}
        $t_k \gets x_{j^*} - \faircap{k}(0)$ \;
    }
\end{algorithm}
\begin{theorem}
    \label{thm:opt-algo-binary-boxes}
    Algorithm~\ref{alg:binary-boxes} computes an optimal exploration contract for binary boxes in polynomial time.
\end{theorem}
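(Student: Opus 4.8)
The plan is to establish two things: (i) the algorithm runs in polynomial time, and (ii) the contract it outputs is optimal. Part (i) is immediate: the outer loop runs $n$ times, the inner loop at most $n$ times, and each evaluation of $e_\calP(x_1,\dots,x_n)$ can be computed in polynomial time because, by Lemma~\ref{lem:binary-boxes:retrieve-policy-from-contract}, fixing all fair caps determines an optimal policy (open boxes in order of fair cap, ties by $b_i-t_i$, accept the first positive prize), and the expected utility of such a policy is a sum of at most $n$ easily computable terms. After computing the $x_k$'s one recovers $t_k = x_k - \faircap{k}(0)$ and, by Lemma~\ref{lem:binary-boxes:opt-order-implies-contract}, checks/constructs the canonical contract $T(\sigma)$ for the induced order $\sigma$; if feasible this is the point-wise minimal contract implementing $\sigma$, and we need to argue it is globally optimal.

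For part (ii), the core claim is a structural lemma: \emph{there exists an optimal contract whose fair caps form a ``staircase'' in which the fair cap of every box equals a basic fair cap $\faircap{j}(0)$ of some box $j \le$ that box in the basic-fair-cap order.} To see why only such values of $\varphi$ need to be considered for box $k$, I would argue by an exchange/perturbation argument: in any optimal contract, suppose box $k$ has fair cap $\varphi \notin \{\faircap{1}(0),\dots,\faircap{k}(0)\}$ and that $\varphi$ is strictly between two consecutive ``relevant'' values (basic fair caps of other boxes, or fair caps currently assigned to other boxes). Within such an interval the set of boxes opened before box $k$, and the position at which box $k$ is considered, do not change; hence $e_\calP$ as a function of $\varphi$ (holding all other fair caps fixed) is \emph{affine} — raising $\varphi$ only changes the required transfer $t_k = \varphi - \faircap{k}(0)$, which enters linearly, and changes the probability mass with which box $k$'s positive prize is accepted in a piecewise-constant way on the interval. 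So the optimum over the interval is attained at an endpoint; pushing $\varphi$ to the nearest endpoint does not decrease $\calP$'s utility, and iterating this drives every fair cap down to a basic fair cap. Lowering fair caps also never makes the canonical contract infeasible, so the resulting contract remains implementable.

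Given the staircase structure, I would prove correctness of the iterative computation by induction on $k$, using the key \emph{decoupling} observation highlighted in the paragraph before the algorithm: because raising box $i$'s fair cap to $\varphi$ costs exactly $t_i = \varphi - \faircap{i}(0)$ \emph{independently of the distribution in box $i$}, the marginal contribution of box $k$'s choice of fair cap to $e_\calP$ depends on the fair caps of the \emph{other} boxes only through the order they induce and the ``best prize so far'' distribution — and, crucially, the optimal choice of $x_k$ among $\{x_1,\dots,x_{k-1},\faircap{k}(0)\}$ does not depend on the not-yet-decided values $x_{k+1},\dots,x_n$, since those boxes all lie below $x_k$ in the order and only contribute in the event that box $k$ (and all earlier boxes) drew the $0$-prize, an event whose probability is unaffected by how we break the choice for box $k$ among candidate caps that are all $\ge \faircap{k}(0) \ge \faircap{k+1}(0) \ge \cdots$. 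Hence greedily fixing $x_k$ to maximize $e_j$ is consistent with some globally optimal staircase, and by induction the final vector $(x_1,\dots,x_n)$ is optimal; Lemma~\ref{lem:binary-boxes:opt-order-implies-contract} then certifies that the canonical contract $T(\sigma)$ realizes it.

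The main obstacle I anticipate is making the decoupling step fully rigorous, i.e., proving that the myopic choice of $x_k$ in iteration $k$ — taken before $x_{k+1},\dots,x_n$ are known — is simultaneously optimal for \emph{some} global optimum rather than merely locally improving. The subtlety is the interaction with tie-breaking in favor of $\calP$ (which orders equal-fair-cap boxes by $b_i-t_i$): when box $k$ is assigned the same fair cap as an earlier box, the transfer $t_k$ affects $b_k - t_k$ and thus where box $k$ sits among the tied boxes, which can change which prize $\calP$ ends up with. I expect this to be handled by showing that among caps giving the same utility one may always pick the smallest, and that the pre-sorting of boxes by non-increasing $b_i$ (the stated tie-break in the algorithm's input) makes the greedy order the $\calP$-favorable one; carefully verifying this, together with the exchange argument that pushes fair caps down to basic values, is where the real work lies.
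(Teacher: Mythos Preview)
Your proposal has the right scaffolding (restrict to basic fair caps, induct on $k$), but the ``decoupling'' step that is supposed to carry the induction does not go through, and this is exactly where all the work in the paper's proof lies.

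Here is the issue concretely. You observe that when the algorithm evaluates candidate caps for box $k$, the later boxes $k+1,\dots,n$ sit at their \emph{basic} fair caps, so they all come after box $k$ and their contribution factors out as $\prod_{j\le k}(1-p_j)\cdot V_{k+1:n}$, independent of the choice of $x_k$. That is correct, but it only tells you that the algorithm's argmax does not depend on the \emph{current} placeholder values $x_{k+1},\dots,x_n$. It does \emph{not} tell you that this argmax coincides with the position of box $k$ in a globally optimal contract, because in such a contract the fair caps of boxes $k+1,\dots,n$ may be raised above $\faircap{k}(0)$ --- indeed above the cap the algorithm assigns to box $k$ --- so some box $r>k$ may sit \emph{between} the algorithm's chosen position for $k$ and the position $k$ occupies in the optimum. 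In that situation the contribution of box $r$ no longer factors out and the ``choice of $x_k$ is independent of later boxes'' claim fails. You flag this yourself as ``the main obstacle,'' but the sketch does not supply any mechanism to overcome it; the separability argument simply does not cover the relevant configurations.

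The paper's proof attacks precisely this obstruction with a careful exchange argument. Assuming inductively that the relative order of boxes $1,\dots,k-1$ agrees between the algorithm's order $\sigma_k$ and some optimal order $\sigma^*$, it considers the interval in $\sigma^*$ between box $k$'s position in $\sigma^*$ and the position the algorithm would give it, and removes every box $r>k$ lying in that interval one by one. For each such $r$ it compares the residual principal value $b_r^k$ (after paying $r$ up to box $k$'s basic fair cap) with $b_k$ and, via two inequality comparisons that mirror swaps the algorithm would have considered, shows that at least one of ``move $r$ past the interval'' or ``move $k$ adjacent to $r$'' is non-deteriorating. Once no $r>k$ remains in between, the algorithm's own optimality for $x_k$ (now evaluated against boxes $1,\dots,k-1$ only, which \emph{is} what your decoupling gives) finishes the step. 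So the missing ingredient in your plan is exactly this mechanism for clearing later-indexed boxes out of the way; without it the induction does not close.
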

\begin{proof}
    For the proof, we show inductively that the decision of the algorithm in each iteration of the outer for-loop regarding box $k$ is optimal. Formally, there is an optimal ordering of all boxes that, when restricted to boxes $1,\ldots,k$, is identical to the ordering computed by the algorithm after iteration $k$.
    By Lemma~\ref{lem:binary-boxes:opt-order-implies-contract}, this implies that the selected fair cap for box $k$ is optimal (and, thus, the payment in $T$), as boxes are numbered and considered in a non-increasing order of their basic fair caps.

    For the rest of the proof we show the following statement. For every $k \in [n]$, let $\sigma_k$ denote the ordering of boxes after iteration $k$ of the outer for-loop in Algorithm~\ref{alg:binary-boxes}. There is an optimal ordering $\sigma^*$ such that the relative order of boxes $1, \dots, k$ is identical in $\sigma^*$ and $\sigma_k$.
 
    We proceed by induction. The statement is trivially true for $k = 1$. Now suppose the statement holds for every of the first $k-1$ iterations, and consider iteration $k$. Let $i^*$ denote the position of box $k$ w.r.t.\ boxes $1,\ldots,k$ in $\sigma^*$, and let $i$ denote its position $\sigma_k$. Note that position $i^*$ corresponds to a position $i_n^* \ge i^*$ in the overall ordering $\sigma^*$ (since for definition of $i^*$ we ignore boxes $k+1,\ldots,n$). Clearly, the statement holds when $i^* = i$.

    For the two remaining cases ($i^* > i$ and $i^* < i$) we show that the optimal ordering can be changed sequentially into one that has $i^* = i$ without decreasing the expected value for $\calP$. We discuss the proof for case 1: $i^* > i$. The proof of the other case is very similar (see Appendix~\ref{app:binary}). 
    
    Consider $i^* > i$. Now suppose box $k$ was brought to position $i$ (w.r.t. first $k$ boxes) in $\sigma^*$. The fair cap of $k$ must be strictly increased, since otherwise $k$ would receive the same payment and the same position in $\sigma_k$ and $\sigma^*$ (due to optimal tie-breaking by the agent in non-increasing order of principal value). Note that position $i$ corresponds to a position $i_n < i_n^*$ in $\sigma^*$.
        
    Consider the set $L$ of boxes located between positions $i_n$ and $i^*_n$ in $\sigma^*$. Consider the first box $r \in L$ with $r > k$ and the position $i_r$ in $\sigma^*$. Let $L_1$ be the boxes located before box $r$ in positions $i_n,\ldots,i_r-1$. We use $\smash{\tilde{p_1}}$ for the combined probability of selection in $L_1$ and $\smash{\tilde{b_1}}$ for the combined expected value of a selected box in $L_1$.
    
    Let $b_r^k$ be the remaining value of box $r$ after a payment that lifts the fair cap of $r$ to the basic fair cap of $k$. We split the analysis into two cases: $b_r^k \ge b_k$ and $b_r^k < b_k$. If $b_r^k \ge b_k$, consider the move of $r$ to position $i_n$. 
    Since there are only boxes $j < k$ in $L_1$, we could place box $k$ on position $i + |L_1| + 1$ in $\sigma_k$ -- right after boxes from $L_1$, similar to the position $i_r$ of box $r$ in $\sigma^*$.
    We use $\Delta_{i_r}$ to denote the payment required to lift the fair cap of box $k$ from its basic fair cap to the one required at position $i_r$. 
    Moreover, $\Delta_{i_n,i_r}$ is the additional payment required to lift the fair cap of box $k$ from the one at position $i_r$ to the one at $i_n$. 
    Since the algorithm places $k$ at $i_n$ instead of $i_r$, we know that
    \[
        p_k (b_k - \Delta_{i_r} - \Delta_{i_n, i_r})  + (1-p_k)\smash{\tilde{p_1}} \smash{\tilde{b_1}} \; \ge \; \smash{\tilde{p_1}}\smash{\tilde{b_1}} + (1-\smash{\tilde{p_1}}) p_k (b_k - \Delta_{i_r}),
    \]
    which implies 
    \[
        b_k \ge \smash{\tilde{b_1}} + \Delta_{i_r} + \Delta_{i_n,i_r}/\smash{\tilde{p_1}}.
    \]
    
    Suppose now we move $r$ from $i_r$ to $i_n$, then we have the same terms with $b_r^k$ instead of $b_k$. Clearly, since $b_r^k \ge b_k$ we know that the move is also profitable. This means that we can move $r$ to position $i_n$ without deteriorating the value of $\sigma^*$. Thereby we reduce the number of boxes $j \ge k$ between the positions of $k$ in $\sigma^*$ and $\sigma_k$.
    
    For the second case consider $b_r^k < b_k$. We here consider two moves in $\sigma^*$ -- either swap box $r$ right after box $k$, or swap box $k$ right before box $r$. Neither of these moves shall maintain the value of $\sigma^*$, and we show that this is impossible.
    
    Let $L_2$ be the boxes located between $r$ and $k$ in $\sigma^*$. We use $\smash{\tilde{p_2}}$ and $\smash{\tilde{b_2}}$ to denote selection probability and expected value of selected box, respectively. Now if we swap box $r$ after box $k$, a strict decrease in value yields
    \begin{align*}
        & p_r (b_r^k - \Delta_{i^*_n} - \Delta_{i^*_n,i_r}) + (1-p_r) \smash{\tilde{p_2}} \smash{\tilde{b_2}} + (1-p_r)(1-\smash{\tilde{p_2}}) p_k (b_k - \Delta_{i^*_n}) \\
        > \; & \smash{\tilde{p_2}} \smash{\tilde{b_2}} + (1-\smash{\tilde{p_2}}) p_k (b_k - \Delta_{i^*_n}) + (1-\smash{\tilde{p_2}})(1-p_k) p_r (b_r^k - \Delta_{i^*_n}).
    \end{align*}
    
    Note that ensuring the same fair cap as box $k$ at position $i^*_n$ is enough, since $b_r^k < b_k$ and box $r$ is then sorted after box $k$. We obtain
    \begin{equation*}
        \smash{\tilde{p_2}} b_r^k + p_k(1-\smash{\tilde{p_2}}) b_r^k \; > \; \smash{\tilde{p_2}} \smash{\tilde{b_2}} + p_k (1-\smash{\tilde{p_2}}) b_k + \smash{\tilde{p_2}} \Delta_{i^*_n} + \Delta_{i^*_n,i_r}
    \end{equation*}
    and, since $b_r^k < b_k$,
    \begin{equation}
        \label{eq:1}
        b_r^k > \smash{\tilde{b_2}} + \Delta_{i^*_n} + \Delta_{i^*_n,i_r}/\smash{\tilde{p_2}}.
    \end{equation}
    
    For the other move, we see that a strict decrease in value yields
    \begin{align*}
        & p_r (b_r^k - \Delta_{i^*_n} - \Delta_{i^*_n,i_r}) + (1-p_r) \smash{\tilde{p_2}} \smash{\tilde{b_2}} + (1-p_r)(1-\smash{\tilde{p_2}}) p_k (b_k - \Delta_{i^*_n}) \\
        > \; & p_k (b_k - \Delta_{i^*_n} - \Delta_{i^*_n,i_r}) + (1-p_k) p_r (b_r^k - \Delta_{i^*_n} - \Delta_{i^*_n,i_r}) + (1-p_k)(1-p_r)\smash{\tilde{p_2}} \smash{\tilde{b_2}}.
    \end{align*}
    
    Note that ensuring the same fair cap as box $r$ at position $i_r$ is enough, since $b_r^k < b_k$ and box $k$ is then sorted before box $r$. We obtain
    \begin{equation*}
        (- p_r - \smash{\tilde{p_2}} + p_r \smash{\tilde{p_2}}) b_k + (1-p_r) \smash{\tilde{p_2}} \Delta_{i^*_n} \; > \; -(1 - p_r) \Delta_{i_n^*,i_r} - p_r b_r^k - \smash{\tilde{p_2}} \smash{\tilde{b_2}} +p_r \smash{\tilde{p_2}} \smash{\tilde{b_2}},
    \end{equation*}
    so
    \begin{equation*}
        (1-p_r) \smash{\tilde{p_2}} \smash{\tilde{b_2}} + (1-p_r) \smash{\tilde{p_2}} \Delta_{i^*_n} + (1 - p_r) \Delta_{i_n^*,i_r} \; > \; \smash{\tilde{p_2}} (1- p_r) b_k + p_r(b_k - b_r^k).
    \end{equation*}
    This implies $p_r < 1$, since otherwise we derive $b_k - b_r^k < 0$, a contradiction. Now, if $p_r \in (0,1)$, $\smash{\tilde{p_2}} \in (0,1]$, and $b_r^k < b_k$, the above implies
    \begin{equation}
        \smash{\tilde{b_2}} + \Delta_{i^*_n} + \Delta_2/\smash{\tilde{p_2}} \; > \; b_k. \label{eq:2}
    \end{equation}
    Equations~\eqref{eq:1} and~\eqref{eq:2} imply a contradiction with $b_r^k < b_k$.
    
    At least one swap (either moving $r$ right after $k$, or moving $k$ right before $r$) must be non-deteriorating in terms of solution value. Using this swap, we can change $\sigma^*$ and decrease the set of agents $j \ge k$ between $i^*_n$ and $i_n$ without loss in value.
    
    Now using these swaps (either the one for $b_r^k \ge b_k$ or one of the two for $b_r^k < b_k$) iteratively, we can turn $\sigma^*$ into an ordering such that there are only boxes $j < k$ in $L$. Then moving box $k$ from $i^*_n$ to $i_n$ is not harmful. We let $p_\ell$ denote the overall probability that a box in $L$ is chosen and $b_{\ell}$ the expected value. Let $\Delta_{i,i^*}$ be the additional payment required to raise the fair cap from the one of position $i^*$ to the one of position $i$.
    
    Indeed, our algorithm could have placed box $k$ at position $i^*$, but preferred to place $k$ in position $i < i^*$. This means that
    \[
        p_k(b_k - \Delta_{i,i^*}) + (1-p_k) p_{\ell} b_{\ell} \ge p_{\ell}b_{\ell} + (1-p_{\ell})p_k b_k,
    \]
    which also reflects the change in value if the move is executed in $\sigma^*$. This proves the inductive step when $i^* > i$.
\end{proof}

\subsection{I.I.D.\ with Single Positive Prize for Principal}
\label{sec:iid}

Finally, we study the subclass of the problem where all distributions are identical, with the further restriction that there is only a single positive prize for the principal. Even this restriction has a perhaps surprisingly complicated solution. We show the following result.

\begin{theorem}\label{thm:iid}
    There exists an algorithm that computes an optimal exploration contract for the i.i.d.\ case with a single positive prize for the principal in polynomial time.
\end{theorem}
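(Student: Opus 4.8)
The plan is to reduce an optimal contract to a short combinatorial description and then enumerate. Since $b_{ij}>0$ for only one prize $j^\ast$ (with value pair $(a^\ast,b^\ast)$ and probability $p$, common to all boxes by the i.i.d.\ assumption), limited liability forces $t_{ij}=0$ for every other prize, so a contract is just a vector $(t_1,\dots,t_n)\in\mathbb R_{\ge 0}^n$ of transfers on the positive prizes. The fair cap of a box is a weakly increasing, piecewise-linear function $\varphi(\cdot)$ of its own transfer (the same function for every box), so by Theorem~\ref{thm:best-order} there is an optimal policy that opens boxes in non-increasing transfer order; we may therefore restrict to non-increasing vectors $(t_1,\dots,t_n)$ and write \calP's expected utility as an explicit function of this vector --- a sum over the events ``the positive prize is first selected from box $i$'', each a product of per-box acceptance/continuation probabilities with a clean closed form in the i.i.d.\ case.

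The heart of the argument is a structural lemma: there is an optimal contract, together with an implemented opening order, consisting of a prefix $1,\dots,k$ of boxes (``phase~$1$'') all carrying one common transfer $t^{(1)}$, large enough that the agent accepts the positive prize on sight whenever it is drawn, followed by a suffix $k+1,\dots,n$ (``phase~$2$'') all carrying one common transfer $t^{(2)}\le t^{(1)}$. I would establish this in stages. First, $a^\ast+t-\varphi(t)$ is non-decreasing in $t$ (the transfer enters $\varphi$ with coefficient at most $p\le 1$), so there is an instance-dependent threshold above which a box accepts the positive prize immediately; by the transfer ordering such boxes form a prefix of the order, which we declare to be phase~$1$. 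Then, using exchange arguments inside a continuous relaxation of the problem --- replacing the discrete boxes and their geometric opening weights by a continuum, so that an infinitesimal amount of transfer can be shifted between positions without perturbing the induced order elsewhere --- I would show that \calP's utility, as a function of the transfer profile restricted to phase~$1$, is optimized at an extreme profile, namely all transfers equal; an analogous exchange, now using only that within phase~$2$ the positive prize contributes through its capped value, equalizes the phase-$2$ transfers. Throughout, the exchanges must be compatible with limited liability and with the \calP-favorable tie-breaking, which --- exactly as in the binary-box analysis of Section~\ref{sec:binary} --- can make boxes with identical fair caps behave differently and has to be tracked.

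Given the structural lemma, an optimal contract is described by the triple $(k,t^{(1)},t^{(2)})$ with $k\in\{0,\dots,n\}$ and $0\le t^{(2)}\le t^{(1)}$. For each fixed $k$, \calP's expected utility is, piecewise, a simple rational function of $(t^{(1)},t^{(2)})$, the pieces being delimited by the finitely many ``critical'' transfer values at which the agent's behavior changes --- where $\varphi(t)$ crosses $a^\ast+t$ or some other agent value $a_j$, or where a boundary constraint becomes tight --- in direct analogy with Section~\ref{sec:linear}. On the interior of each piece the behavior is fixed, hence all selection probabilities are constants and \calP's utility equals $\mathrm{const}-A\,t^{(1)}-B\,t^{(2)}$ with $A,B\ge 0$ (raising a transfer without changing behavior only lowers $b^\ast-t_i$), so the optimum over the piece is a corner. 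Enumerating $k$ together with these polynomially many corners, evaluating each candidate contract via Theorem~\ref{thm:best-order}, and returning the best proves Theorem~\ref{thm:iid}.

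The main obstacle is the structural lemma, and within it the two phase-equalization exchanges. These must simultaneously respect limited liability, leave the agent's behavior on the untouched boxes unchanged (so that an exchange has a transparent effect on the objective), and accommodate the \calP-favorable tie-breaking; making the exchange provably monotone in the right direction appears to genuinely require the continuization, since a direct discrete exchange runs into integrality obstructions. The remaining critical-values enumeration is, by comparison, routine; the companion construction of an instance in which both phases are long in every optimal contract --- witnessing that the two-phase shape cannot be further simplified --- is a separate, self-contained calculation.
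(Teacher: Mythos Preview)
Your proposal is essentially the paper's proof: reduce to the two-phase structural lemma (common transfer in each phase), then enumerate the polynomially many candidates $(k,t^{(1)},t^{(2)})$, with the candidate transfers pinned to the finitely many behavior-changing thresholds. One technical difference: the paper equalizes Phase~1 by a direct \emph{discrete} pairwise exchange (given adjacent Phase-1 boxes with fair caps $a_r>a_s$, either replacing $a_r$ by $a_s$ or $a_s$ by $a_r$ is weakly beneficial), and reserves the continuization/derivative argument for Phase~2 only --- so your claim that ``a direct discrete exchange runs into integrality obstructions'' in Phase~1 is not borne out, and you can save the continuous relaxation for Phase~2 where it is actually needed.
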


We introduce some simplified notation for this subclass. All of the $n$ boxes have an identical distribution over prizes $\{0\}\cup[m]=\{0,1,\dots,m\}$ and the same opening cost $c \ge 0$.
For each prize $j \in \{0,1,\dots,m\}$, the utilities for \calA and \calP are $a_j$ and $b_j$, respectively.
Without loss of generality, prize $0$ is the only prize for which \calP has a positive value, i.e., $b_j = 0$ for all $j \ge 1$.
The probability that prize $j$ is drawn when a box is opened is denoted by $q_j \in [0,1]$.
For the ease of notation, we define $p := q_0$ and $v := b_0$. For each box $i \in [n]$, $\mathcal P$ defines a (non-negative) payment of $t_i \le v$ for prize $0$.
We may assume $v > 0$ and $p > 0$ as otherwise $t_1=\dots=t_n=0$ would be the unique transfers.

We make two further assumptions that are without loss of generality. 
First, the values $a_j$ for $j \ge 1$ are unique: If there are $j_1, j_2 \in [m]$ with $a_{j_1} = a_{j_2}$, replace both prizes by a new one with probability $q_{j_1} + q_{j_2}$. Second, we assume $a_1 \le a_2 \le \dots \le a_m$.

Similarly as in Section~\ref{sec:binary}, let $\faircap{}(t)$ be the fair cap of a box for $\calA$ when payment $t \geq 0$ is made for prize $0$ of that box. We call $\faircap{}(0)$ the \emph{basic fair cap}. Note that there is an optimal contract in which each fair cap is either $\faircap{}(0)$ or some value $a_j$, where $j\in[m]$, with $a_j>\faircap{}(0)$. The reason is that, if that were not the case, we could decrease the payment for each box, until that property is satisfied while keeping the policy optimal.
Importantly, the agent value of outcome 0 (which is given by $a_0 + t_i$) need not be considered here:
Whenever outcome 0 is drawn from a box $i$ with $\faircap{i} > \faircap{}(0)$, the process stops immediately since, by definition of the fair cap and the index policy, it must hold $a_0 + t_i > \faircap{i} \geq \faircap{i+1}$, where $\faircap{i+1}$ is the fair cap of the next box.

Observe that $\faircap{}(0)$ can be achieved with different payments $t$, as long as $a_0 + t \le \faircap{}(0)$. This may influence the stopping behavior of \calA as the payment may determine the order of $a_0+t$ and $a_j$ for some prize $j$. 
To achieve some fair cap larger than $\faircap{}(0)$, however, there is a unique payment that achieves this fair cap because changing a values above the fair cap that occurs with non-zero probability always changes the fair cap by definition.

Now consider an optimal contract. Index the boxes in the order in which they are opened, and let $\faircap{1} \ge \faircap{2} \ge \dots \ge \faircap{n}$ be the corresponding fair caps. We first concentrate on a potential set of \emph{basic} boxes with $\faircap{}(0)$ in the final rounds of the process.

If $a_0 \ge \faircap{}(0)$, then every positive payment of $\mathcal P$ would increase the fair cap to above $\faircap{}(0)$. Therefore, if basic boxes with fair cap $\faircap{}(0)$ exist in an optimal contract, $t_i = 0$ must hold for all these boxes.

Otherwise, if $a_0 < \faircap{}(0)$, suppose $j^* \in [m]$ is the best other prize for $\mathcal A$ that is not exceeding the basic fair cap, i.e., $j^* := \arg \max_{j \in [m]} \{a_j : a_j \le \faircap{}(0)\}$. Observe that for every basic box $i$, an optimal payment $t_i$ either fulfills $a_0 + t_i = a_j$ for some $j \in [j^*]$, or $a_0 + t_i = \faircap{}(0)$. W.l.o.g., when prize 0 is drawn in a basic box $i$ with $a_0 + t_i = \faircap{}(0)$, the process stops (by tie-breaking in favor of $\mathcal P$, who cannot improve), and when an outcome $j \ge 1$ with $a_j = \faircap{}(0)$ is drawn in any round before $n$, the process continues (again by tie-breaking in favor of $\mathcal P$, who gets $0$ if $a_j$ is accepted).

We next show that all basic boxes $i$ with $a_0 + t_i = \faircap{}(0)$ can be assumed to be opened first among the set of basic boxes. The argument is an exchange argument and given in Appendix~\ref{app:iid}.

\begin{lemma}\label{lem:iid-p2}
	There is an optimal contract where all basic boxes $\ell$ with $a_0 + t_{\ell} = \faircap{}(0)$ are opened first among the set of basic boxes.
\end{lemma}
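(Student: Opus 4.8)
The plan is to prove the lemma by a local exchange argument on the order in which \calA opens the basic boxes. First, dispose of the case $a_0 \ge \faircap{}(0)$: as noted before the lemma, every basic box then has payment $0$, so $\{\ell : a_0 + t_\ell = \faircap{}(0)\}$ is either empty or equal to the whole set of basic boxes, and the claim is trivial. So assume $a_0 < \faircap{}(0)$ and fix an optimal contract $T$. Call a basic box \emph{stopping} if $a_0 + t_i = \faircap{}(0)$ and \emph{continuing} if $a_0 + t_i < \faircap{}(0)$ (recall that w.l.o.g.\ each basic box has $a_0 + t_i$ equal to $\faircap{}(0)$ or to $a_j$ for some $j \le j^*$). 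Among the optimal policies for $T$, let $\pi$ be one that minimizes the number of ``bad pairs'', i.e., pairs consisting of a continuing box $c$ and a stopping box $s$ with $c$ opened before $s$ in $\pi$; index boxes in $\pi$'s opening order, so that the basic boxes (those with fair cap $\faircap{}(0)$) form a suffix. It suffices to show that the number of bad pairs is $0$, since this says precisely that every stopping box precedes every continuing box.

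I would record two structural facts, both following from \calA's index policy together with tie-breaking in favor of \calP: (i) drawing any prize from a \emph{continuing} box leaves \calA's best capped agent value at most $\faircap{}(0)$ -- hence at most the fair cap of every later box -- so \calA is not forced to stop there; and (ii) when prize $0$ is drawn from a \emph{stopping} box the process stops (as already observed before the lemma), and \calP then collects exactly $v^\star := v - \faircap{}(0) + a_0$ provided no prize of agent value exceeding $\faircap{}(0)$ has appeared, while once such a prize has appeared \calP's attainable value is ``locked'' so that all continuations are equivalent for her.

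Now suppose the number of bad pairs is positive. Then there is an \emph{adjacent} bad pair: a continuing box $c$ that $\pi$ opens immediately before a stopping box $s$. Let $\pi'$ be $\pi$ with the positions of $c$ and $s$ exchanged. Since $c$ and $s$ have equal fair caps, $\pi'$ is again \calA-optimal, and I claim it yields \calP expected utility at least that of $\pi$; then $\pi'$ is optimal as well, yet has strictly fewer bad pairs, contradicting the choice of $\pi$. For the utility comparison I would condition on the history up to the pair -- identical under $\pi$ and $\pi'$ -- and split on whether $s$ reveals prize $0$. If it does not (probability $1-p$): box $c$ is opened under both policies (trivially under $\pi$; under $\pi'$ because an optimal \calP-response does not stop right after $s$ here), by (i) it never terminates the process, and both policies reach the same continuation state with the same distribution, so the remaining expected \calP-utility coincides. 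If $s$ does reveal prize $0$ (probability $p$): under $\pi'$, \calP collects $v^\star$; under $\pi$, where $c$ has already been opened, \calP collects $v^\star$ unless $c$'s prize had agent value exceeding $\faircap{}(0)$, in which case \calP collects $0 \le v^\star$. Hence $\pi'$ is weakly better for \calP.

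The part I expect to require the most care is the verification of (i), (ii) and of the event-by-event comparison: the bulk of the work is checking that each of \calA's ties -- whether to stop when the incumbent equals the next fair cap, which among several maximum-capped prizes to select, and the order among equal-fair-cap boxes -- can be resolved in favor of \calP consistently with the behavior asserted above, and in particular that in the ``no stop at $s$'' branch the continuation state genuinely coincides under $\pi$ and $\pi'$. The remaining ingredients -- the reduction to $a_0 < \faircap{}(0)$, the existence of an adjacent bad pair whenever any bad pair exists, and the fact that exchanging an adjacent pair decreases the bad-pair count by exactly one -- are routine.
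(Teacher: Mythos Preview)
Your overall strategy---an adjacent-swap exchange argument among the basic boxes---is exactly the route the paper takes. The paper swaps the payments of two consecutive basic boxes (equivalently, in the i.i.d.\ setting, swaps their opening order) and then does a case analysis on the pair of realized prizes.

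There is, however, a concrete error in your claim (i), and it infects the ``$s$ does not reveal prize $0$'' branch. Weitzman's stopping rule compares the agent's \emph{actual} best seen value to the next fair cap, not the capped value. So if any basic box---continuing or stopping---reveals a prize $j$ with $a_j > \faircap{}(0)$ (i.e.\ $j>j^*$), the agent is \emph{forced} to stop, and $\calP$ gets $0$. Hence (i) is false as stated, and in the $j(s)\neq 0$ branch both of your sub-claims fail: under $\pi'$, if $a_{j(s)}>\faircap{}(0)$ the process stops at $s$ and $c$ is never opened; and under either policy, if $a_{j(c)}>\faircap{}(0)$ the process terminates at $c$. (You do account for the latter possibility correctly in the $j(s)=0$ branch, which is inconsistent with (i).)

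The conclusion you want in the $j(s)\neq 0$ branch---that $\pi$ and $\pi'$ give \calP the same expected utility---is still true, but it needs a fuller split: if either of $j(s),j(c)$ exceeds $j^*$ then both policies end with value $0$ for \calP (possibly at different boxes, but with the same joint probability), and if neither does then both policies reach the identical continuation state. This is precisely what the paper does: it fixes both realized prizes and checks six cases, using the i.i.d.\ assumption to pair off the two asymmetric cases. Once you replace (i) by this case split, your argument goes through.
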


We now divide the process into two phases. In every round $i$ of Phase 1, $\faircap{i} \ge \faircap{}(0)$ and prize 0 gets accepted by $\mathcal{A}$ immediately if it is found. In Phase 2, $\faircap{i} = \faircap{}(0)$ and prize 0 is not accepted immediately (only if in the last round a prize 0 represents the best one for $\mathcal{A}$). Suppose Phase 1 contains boxes $1, \dots, k$ and Phase 2 boxes $k+1, \dots, n$. Note that either phase might be empty.

Before analyzing the process, we show an example that shows the perhaps counter-intuitive fact that the optimal contract might indeed involve two such phases, each of substantial length. Moreover, even though all boxes have the same distribution and the same fair cap, in the optimal contract they are assigned one of two different payments depending on their position in the inspection sequence.

\begin{example} \rm
Consider an instance with $m = 2$. Prize 0 has value $a_0 = 0$ and $b_0 = 1+\alpha$ for some $\alpha > 0$ determined below, as well as $q_0 = 1/n$. The other prizes have $a_1 = 2$, $q_1 = 1/n$ and $a_2 = 0$, $q_2 = 1-2/n$. The inspection cost is $c = 1/n$, and the basic fair cap is $\faircap{}(0) = 1$.

Clearly, if prize 1 is found in a box, then $\mathcal{A}$ always accepts. If prize 2 is found, $\mathcal{A}$ continues to inspect further boxes. Suppose $\mathcal A$ draws a prize 0 in the first round. It is rather unlikely that $\mathcal{A}$ will run through all remaining $n-1$ boxes without finding a prize 1 and eventually accept prize 0 from round 1. Specifically, this probability is $(1-1/n)^{n-1}$. Now $\mathcal P$ can set $t_1 = 1 = \faircap{}(0)$, leading to direct acceptance by $\mathcal A$ and immediate profit of $\alpha$. Alternatively, $\mathcal P$ can decide to set $t_1 = 0$ (and, in an optimal contract, then $t_i = 0$ throughout) gambling for a high profit of $1+\alpha$ in the end. The gambling strategy is unprofitable if 
\[
    \alpha \ge (1+\alpha) \cdot (1-1/n)^{n-1} \quad \text{ or, equivalently, } \quad \alpha \ge \frac{(1-1/n)^{n-1}}{1-(1-1/n)^{n-1}}\enspace.
\]
Suppose $\alpha$ satisfies this inequality, then round 1 is part of Phase 1 in an optimal contract. Via the same calculation, we can determine further rounds $i$ of Phase 1. $\mathcal A$ accepts prizes 0 and 1 in all earlier rounds $1,\ldots,i-1$, so reaching round $i$ is only possible if prize 2 has been found in all these rounds. 

Now suppose for some $k \in \{1,\ldots,n-1\}$
\[
\alpha = \frac{(1-1/n)^{k}}{1-(1-1/n)^{k}}\enspace.
\]
Then $\mathcal P$ wants to set $t_i = 1$ for all $i\le k$, i.e., this becomes Phase 1 with direct acceptance of prize 0. For all rounds $i \ge k+1$, it is more profitable to gamble that no prize 1 will arrive in the remaining rounds and $\mathcal P$ will secure a value of $1+\alpha$. This represents Phase 2. \hfill $\blacksquare$
\end{example}

Turning to the analysis, we start by examining Phase 1. Using a relatively straightforward exchange argument given in Appendix~\ref{app:iid}, we show the following.

\begin{lemma}\label{lem:iid-p1}
	There is an optimal contract in which all boxes of Phase 1 have the same fair cap.
	\label{lem:iid-single-P-option:identical-faircaps-phase1}
\end{lemma}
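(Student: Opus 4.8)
\textbf{Proof plan for Lemma~\ref{lem:iid-p1}.}

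The plan is to take an optimal contract that already satisfies the structural properties established so far (two phases, with Phase~1 boxes $1,\dots,k$ all having fair cap at least $\faircap{}(0)$ and accepting prize~0 immediately) and show that if the Phase-1 fair caps are not all equal, we can modify the contract without decreasing \calP's expected utility, strictly reducing the number of distinct fair cap values in Phase~1. Since boxes in Phase~1 are opened in non-increasing order of fair caps, I would focus on two consecutive Phase-1 boxes, say at positions $i$ and $i+1$, with $\faircap{i} > \faircap{i+1} \ge \faircap{}(0)$, and argue that one of the two ``levelling'' moves -- either lowering $\faircap{i}$ toward $\faircap{i+1}$, or raising $\faircap{i+1}$ toward $\faircap{i}$ -- is non-deteriorating for \calP. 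The key observation enabling clean accounting is that in Phase~1 every box behaves identically from \calA's point of view once prize~0 is found (immediate acceptance), and the distribution is i.i.d., so the only thing that matters for \calP's utility is, for each Phase-1 box in order, the probability that the process stops there with prize~0 accepted (contributing $v - t$ where $t$ is that box's payment) and the probability it passes through.

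The central computation is to write \calP's expected utility as a function of the payments $t_i, t_{i+1}$ of the two boxes under consideration, holding everything else fixed. Let $\beta$ denote the probability that, upon drawing prize~0, the fair cap exceeds it -- but since these boxes are in Phase~1 that is probability~$1$; the relevant quantities are rather: $p = q_0$ (prize~0 drawn), and the probability $s$ that \calA stops in a given Phase-1 round conditioned on reaching it, which equals the probability of drawing prize~0 \emph{or} a prize $j\ge 1$ with $a_j \ge \faircap{}$ of that box -- and here the fair cap value does matter, because a larger fair cap means fewer prizes $j$ cause an immediate stop. This is exactly the handle: raising $\faircap{i+1}$ changes which prizes $j\ge 1$ trigger acceptance at box $i+1$, but those prizes have $b_j = 0$, so \calP is indifferent to whether \calA stops on them except insofar as it changes \emph{downstream} opportunities to collect prize~0. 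Writing \calP's utility as an affine (in fact, after grouping, piecewise-linear convex or concave) function of a single scalar parametrizing the simultaneous shift, I expect to find that the coefficient has a definite sign determined by a comparison of $v$ against the expected downstream gain, so pushing the parameter to one endpoint (making the two fair caps coincide, or merging one box into the adjacent level) never hurts. Iterating over all adjacent pairs collapses Phase~1 to a single fair cap level.

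The main obstacle I anticipate is bookkeeping the interaction with the payment constraint $t_i \le v$ and with the boundary between the two phases: lowering $\faircap{i}$ could in principle push box~$i$ below $\faircap{}(0)$, i.e.\ out of Phase~1, and raising $\faircap{i+1}$ is limited by $v$ (since $a_0 + t_{i+1} = \faircap{i+1}$ and $t_{i+1}\le v$). So the argument must be careful that at least one of the two moves stays feasible and that when a move would exit Phase~1 it actually coincides with the already-handled structure (a box with fair cap exactly $\faircap{}(0)$, covered by Lemma~\ref{lem:iid-p2}). I would handle this by choosing, among the two levelling directions, the one that keeps feasibility, and showing that the ``bad'' direction is exactly the one whose linear coefficient has the favorable sign -- a dichotomy argument in the spirit of the exchange arguments in the binary-boxes proof (Theorem~\ref{thm:opt-algo-binary-boxes}). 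A secondary technical point, flagged in the paper's own remarks, is that ties must be broken in favour of \calP throughout; when two Phase-1 boxes end up with the same fair cap, their relative order is then governed by tie-breaking on $b_j - t_j = v - t$, which is consistent since both have prize~0 as the only \calP-positive outcome, so no new ambiguity is introduced. The full details of the affine expansion are routine once the right single-parameter family of perturbations is fixed, so I would relegate them to Appendix~\ref{app:iid} as the statement already indicates.
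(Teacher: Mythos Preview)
Your plan is essentially the paper's own approach: take two consecutive Phase-1 boxes $\ell,\ell+1$ with $\faircap{\ell}>\faircap{\ell+1}$ and show that one of the two ``levelling'' moves (set $\faircap{\ell}\leftarrow\faircap{\ell+1}$ or $\faircap{\ell+1}\leftarrow\faircap{\ell}$) is weakly beneficial, then iterate. The paper carries this out by writing $E(1,n)$ as $E(1,\ell-1)$ plus two single-box terms plus $\Pr[R_{\ell+2}]\cdot E(\ell+2,n)$, observing that the first and last pieces are invariant under both moves, and then showing directly that if the first move hurts then the second helps, via the monotonicity $r(a_r,i)/r(a_s,i)$ non-decreasing in $i$ (since $a_r>a_s$ gives a larger continuation probability per round).

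Two points where your anticipated difficulties dissolve. First, the feasibility worries are empty: lowering $\faircap{\ell}$ only goes down to $\faircap{\ell+1}\ge\faircap{}(0)$, so box $\ell$ stays in Phase~1; and raising $\faircap{\ell+1}$ to $\faircap{\ell}$ uses exactly the payment $t(\faircap{\ell})$ that box $\ell$ already uses, so $t\le v$ is automatic. Second, your equation $a_0+t_{i+1}=\faircap{i+1}$ is wrong---in Phase~1 one has $a_0+t_i>\faircap{i}$ (that is precisely why prize~0 is accepted immediately), so the fair cap is determined by the full equation involving all prizes above it, not by prize~0 alone. Relatedly, the utility is \emph{not} affine in any natural scalar here (the reach probabilities are products $(\sum_{j\in S(\varphi)}q_j)^{i-1}$), so your ``sign of a linear coefficient'' heuristic should be replaced by the direct comparison of the three contracts; the dichotomy still goes through, just not for the reason you sketch.
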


In Phase 2, all boxes have fair cap $\faircap{}(0)$, but this does not exclude the possibility of different payments. If the first box of Phase 2, box $k+1$, is about to be opened, the highest agent value observed so far (if any) is at most $\faircap{}(0)$. In particular, prize 0 was never observed at that point, since this would have led to acceptance of $\mathcal{A}$ in an earlier round.

During Phase 2, the process only stops early when an outcome $j > j^*$ is drawn. In this case, the utility for $\mathcal P$ is 0. If the process does not stop early, $\mathcal P$ receives a utility only if outcome 0 is the best outcome for the agent among all $n$ outcomes.

$\mathcal P$ could have different payments for different boxes as long as, for every box $i$, there is some $j \in [j^*]$ such that $a_0 + t_i = a_j$.
If an outcome from a box with payment $t = a_j - a_0 \le \faircap{}(0) - a_0$ (for some $j \in [j^*]$) is the maximum after $n$ rounds, then
\begin{itemize}
	\item outcome 0 must have been drawn from any number of boxes that have this payment $t$, and
	\item in all boxes with a payment of at most $t$, some outcome $0 \le j' < j$ was drawn, and
	\item in all remaining boxes, some outcome $1 \le j' < j$ was drawn.
\end{itemize}
Let $n_j := |\{i \in [n] : a_0 + t_i = a_j\}|$ denote the number of boxes of Phase 2 with such payment, for every $j \in [j^*]$.
Furthermore, for every $j \in [j^*]$, let $Q_j := \sum_{j' = 1}^j q_{j'}$ denote the combined probability of all outcomes (other than 0) that are not better for $\mathcal A$, and $N_j = \sum_{j' = 1}^j n_{j'}$ denote the number of boxes with the according payment.
The probability that outcome 0 with a payment of $t = a_j - a_0$ is the best outcome after $n$ rounds is then given by
\begin{align*}
		& \sum_{i = 1}^{n_j}\binom {n_j} i \cdot p^i \cdot (p + Q_j)^{N_{j-1}} \cdot Q_j^{n - N_{j-1} - i}\\
	&= \sum_{i = 1}^{n_j}\binom {n_j} i \cdot p^i \cdot \left(\frac {p + Q_j} {Q_j}\right)^{N_{j-1}} \cdot Q_j^{n + n_j - n_j - i}\\
	&= \sum_{i = 1}^{n_j}\binom {n_j} i \cdot p^i \cdot \left(\frac {p} {Q_j} + 1\right)^{N_{j-1}} \cdot Q_j^{n - n_j} \cdot Q_j^{n_j - i}\\
	&= \left(\frac {p} {Q_j} + 1\right)^{N_{j-1}} \cdot  Q_j^{n - n_j} \cdot \left(\sum_{i = 0}^{n_j}\binom {n_j} i \cdot p^i \cdot Q_j^{n_j - i} - Q^{n_j}\right)\\
	&= \left(\frac {p} {Q_j} + 1\right)^{N_{j-1}} \cdot  Q_j^{n - n_j} \cdot \left( (p + Q_j)^{n_j} - Q^{n_j}\right)\\
	&= \left(\frac {p} {Q_j} + 1\right)^{N_{j-1}} \cdot  Q_j^{n} \cdot \left( \left(\frac{p} {Q_j} + 1\right)^{n_j} - 1\right)\\
	&= Q_j^{n} \cdot \left( \left(\frac{p} {Q_j} + 1\right)^{N_j} - \left(\frac{p} {Q_j} + 1\right)^{N_{j-1}}\right).
\end{align*}
Thus, the expected utility for $\mathcal P$ for Phase 2 is 
\[\sum_{j = 1}^{j^*} v_j \cdot  Q_j^{n} \cdot \left( \left(\frac{p} {Q_j} + 1\right)^{N_j} - \left(\frac{p} {Q_j} + 1\right)^{N_{j-1}}\right).
\]

With this at hand, we show the following lemma in Appendix~\ref{app:iid}.

\begin{lemma}
    \label{lem:iid-final}
	There is an optimal contract in which all boxes of Phase 2 have the same payment for prize 0. There is $j^+ \in [j^*]$ such that $n_{j^+} = n-k$ (and $n_j = 0$ for all $j \neq j^+$).
\end{lemma}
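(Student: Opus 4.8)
The plan is to fix an optimal contract and, by Lemmas~\ref{lem:iid-p2} and~\ref{lem:iid-p1}, put it into the two-phase form described above, with Phase~1 being boxes $1,\dots,k$ (of a common fair cap) and Phase~2 being boxes $k+1,\dots,n$; I would then optimize only the Phase-2 payments, holding $k$ and Phase~1 fixed. The event of reaching a given round of Phase~1 depends only on the Phase-1 boxes, so $\mathcal P$'s Phase-1 utility is independent of how Phase~2 is laid out, and it suffices to show that the displayed Phase-2 objective
\[
	F(n_1,\dots,n_{j^*})\;:=\;\sum_{j=1}^{j^*} v_j\,Q_j^{\,n}\left(\Bigl(\tfrac{p}{Q_j}+1\Bigr)^{N_j}-\Bigl(\tfrac{p}{Q_j}+1\Bigr)^{N_{j-1}}\right)
\]
is maximized, over nonnegative integers with $\sum_j n_j=n-k$, by a vector supported on a single coordinate $j^+\in[j^*]$. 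All modifications below only move Phase-2 boxes between payment levels $a_0+t_i\in\{a_1,\dots,a_{j^*}\}$, which keeps each box's fair cap at $\faircap{}(0)$; so the two-phase structure and the guarantees of the earlier lemmas are preserved throughout.

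By the standing assumption that each transfer is bounded by $\mathcal P$'s value for the prize, every Phase-2 level $j$ satisfies $v_j=v-(a_j-a_0)\ge 0$. The core is an \emph{edge} claim proved by continuization. Fix all coordinates but two levels $j_1<j_2$ with $n_{j_1},n_{j_2}>0$ and parametrize the configurations by moving $\delta\in[-n_{j_1},n_{j_2}]$ units of mass from $j_2$ to $j_1$ (now letting $\delta$ be real). Only the partial sums $N_j$ with $j_1\le j\le j_2-1$ change (by $+\delta$ each), so a short computation gives
\[
	F(\delta)\;=\;C+\sum_{j=j_1}^{j_2}\gamma_j\,r_j^{\,\delta},\qquad r_j:=1+\tfrac{p}{Q_j},
\]
for a constant $C$ and constants $\gamma_{j_1}=v_{j_1}Q_{j_1}^n r_{j_1}^{N_{j_1}}\ge 0$, $\gamma_j=v_jQ_j^n(r_j^{N_j}-r_j^{N_{j-1}})\ge 0$ for $j_1<j<j_2$, and $\gamma_{j_2}=-v_{j_2}Q_{j_2}^n r_{j_2}^{N_{j_2-1}}\le 0$. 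Since $Q_j$ is strictly increasing, $r_{j_1}>\dots>r_{j_2}>1$. If $v_{j_2}=0$ then $\gamma_{j_2}=0$ and $F$ is a nonnegative combination of the convex functions $r_j^{\,\delta}$, hence convex, with maximum on $[-n_{j_1},n_{j_2}]$ at an endpoint; so I henceforth assume $v_{j_2}>0$, i.e., $\gamma_{j_2}<0$ (then also $\gamma_{j_1}>0$ by monotonicity of $v$).

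In this case $F$ is quasi-convex on $[-n_{j_1},n_{j_2}]$. Dividing $F''(\delta)=\sum_j\gamma_j(\ln r_j)^2 r_j^{\delta}$ by $r_{j_2}^{\delta}>0$ yields $\gamma_{j_2}(\ln r_{j_2})^2+\sum_{j<j_2}\gamma_j(\ln r_j)^2(r_j/r_{j_2})^\delta$, a negative constant plus a nonnegative-coefficient, strictly increasing combination of exponentials; it tends to $\gamma_{j_2}(\ln r_{j_2})^2<0$ as $\delta\to-\infty$ and to $+\infty$ as $\delta\to+\infty$, so it crosses $0$ exactly once. Hence $F''$ is negative then positive, so $F'$ is strictly decreasing then strictly increasing; combined with $F'(\delta)\to 0^-$ as $\delta\to-\infty$ and $F'(\delta)\to+\infty$ as $\delta\to+\infty$, this forces $F'<0$ to the left of some point and $F'>0$ to the right, so $F$ is non-increasing then non-decreasing and attains its maximum on $[-n_{j_1},n_{j_2}]$ at an endpoint, where one of $n_{j_1},n_{j_2}$ vanishes. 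To finish, I start from an integer-optimal configuration; while it uses at least two levels, I apply the edge claim to two of them and move to the maximizing endpoint — an integer point with $\mathcal P$-utility at least as large and one fewer used level. After at most $j^*-1$ steps the configuration is supported on a single $j^+\in[j^*]$ with $n_{j^+}=n-k$.

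I expect the edge claim to be the main obstacle: $F$ need not be convex on an edge (the term $\gamma_{j_2}r_{j_2}^\delta$ with $\gamma_{j_2}<0$ is concave), so one genuinely needs the sign pattern of the $\gamma_j$ together with the strict ordering $r_{j_1}>\dots>r_{j_2}$ to pin down the single sign change of $F''$ and the resulting unimodality of $F'$. Secondary points requiring care are checking that every move keeps the Phase-1/Phase-2 partition (and hence the earlier lemmas' structural guarantees) intact, and that the degenerate cases ($v_{j_2}=0$; prizes of probability $0$, which can be removed so that $Q_j$ is strictly increasing) are harmless.
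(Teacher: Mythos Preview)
Your proposal is correct, and its high-level strategy coincides with the paper's: continuize the integer allocation, show that along any edge between two used levels the Phase-2 objective is maximized at an endpoint, and iterate until only one level remains. The technical execution of the edge claim differs. The paper chooses $j_1<j_2$ to be \emph{adjacent} used levels (so all intermediate $n_j=0$), which collapses $F(\delta)$ to just two exponential terms; it then sets the first derivative to zero and substitutes into the second derivative to conclude that every critical point is a strict local minimum. You instead allow an arbitrary pair of used levels, keep all intermediate terms, and exploit the sign pattern of the coefficients $\gamma_{j_1},\dots,\gamma_{j_2-1}\ge 0$, $\gamma_{j_2}\le 0$ together with the strict ordering $r_{j_1}>\dots>r_{j_2}>1$ to show $F''$ has a single sign change and hence $F$ is quasi-convex. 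The paper's adjacency trick buys a two-line computation at critical points and avoids any asymptotic analysis of $F'$; your argument is a bit longer but does not rely on first picking the pair cleverly, and it makes the structural reason (one negative exponential with the smallest base) more transparent. Your treatment of the degenerate cases ($v_{j_2}=0$, zero-probability prizes) and of the Phase-1/Phase-2 decoupling is adequate.
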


By the previous lemmas, it suffices to enumerate all combinations of the length of Phase 1, the fair cap in Phase 1, and the unique payment of prize 0 for Phase 2 (determined by a prize $j$ to be targeted, a payment of 0, or a payment resulting in value $\faircap{}(0)$ for $\mathcal A$ for prize 0). These are polynomially many combinations, and for each combination the expected utility of \calP for the resulting contract can be computed in polynomial time. Theorem~\ref{thm:iid} follows.

\bibliography{arxiv_V3}

\clearpage
\appendix 

\section{Missing Material from Section~\ref{sec:prelims}}
\label{app:tieBreak}

\begin{proof}[Proof of Theorem~\ref{thm:best-order}.]
We first argue that the ambiguity regarding the choice of the fair caps can be resolved as claimed. This only concerns the boxes $i$ with $c_i=0$. Note that $\faircap{i}$ is consistent with the definition if and only if $\faircap{i} \ge \max_{j:p_{ij}>0} \{a_{ij} + t_{ij}\}$. We argue that it is optimal for \calP to choose these fair caps large enough for all the corresponding boxes to be opened at the very beginning, regardless of the found prizes. Note that it would only be consistent with the index policy \emph{not} to open a box $i$ with $c_i=0$ if the prize observed so far yielded a utility of at least $\max_{j:p_{ij}>0} \{a_{ij} + t_{ij}\}$ for \calA. Since the latter quantity is precisely the largest utility that a prize in box $i$ could yield for \calA, \calA would only select a prize from box $i$ when it is favorable for \calP. Therefore, it is never detrimental for \calP if box $i$ is opened, so box $i$ may as well be opened at the very beginning.

Thus, in the remainder of this proof, we consider the boxes with strictly positive cost. 
We partition these boxes into \emph{phases} according to their fair caps, i.e., all boxes with an identical fair cap belong to the same phase. The optimal order of phases is unambiguously determined by the index policy. Hence with respect to optimal tie-breaking, considering the \calP-optimal action within each phase separately, conditioned on the previous realizations, suffices.

Let $\faircap{}$ be the fair cap of any phase.
If we consider the first phase, we define $a^* = 0$.
Otherwise, let option $j'$ from box $i'$ be a best option (for \calA) from previous phases, with ties broken in favor of \calP.
Then $a^* := a_{i'j'}+t_{i'j'}$ is the utility for \calA for this option.
Note that the current phase is only reached when $a^* \le \faircap{}$, and does not need to be considered otherwise.

For each box $i$ of the current phase, we partition the set of options with positive probability depending on the behavior of \calA during the phase, which is given by \calA's index policy.
Let \[\Omega_i^{>} := \{j \in [m] ~:~ p_{ij} > 0,~a_{ij} + t_{ij} > \faircap{i}\}\] denote the set of options from box $i$ where \calA would stop immediately.
Furthermore, let \[\Omega_i^{<} := \{j \in [m] ~:~ ~p_{ij} > 0,~a_{ij} + t_{ij} < \faircap{i}\}\] denote the set of options from box $i$ that \calA would never take immediately (unless all boxes of the current phase are opened). 
Upon observing all other options, \calA is indifferent, and they are collected in the set
\[\Omega_i^{=} := \{j \in [m] ~:~ p_{ij} > 0,~a_{ij} + t_{ij} = \faircap{i}\}.\]
Note that if $a^* = \faircap{}$, \calA can stop at any time and take the previously revealed value of $a^*$.
Additionally, note that the set $\Omega_i^{>}$ is never empty, as boxes have strictly positive opening costs.

We reformulate the problem of finding a \calP-optimal policy among all \calA-optimal policies for the current phase as follows:
There are $n$ boxes in the current phase, and there is a \emph{fallback option} with value $v^*$, where
\[v^* = \begin{cases}
    0,& \text{ if } a^* < \faircap{},\\
    b_{i'j'} - t_{i'j'},& \text{ otherwise.}
\end{cases}\]
For each box $i \in [n]$, there are $m_i := |\Omega_i^{=}| + 2$ options that are drawn independently from a known distribution.
Option 1 has probability $\tilde p_{i,1}$ and value $v_{i,1}$, and combines all options from $\Omega_i^{>}$ that make \calA stop immediately.
The value is the conditioned expected utility of those options for \calP. 
More formally, 
\[\tilde p_{i,1} := \sum_{j \in \Omega_i^{>}} p_{ij} \quad \text{and} \quad v_{i,1} := \frac {\sum_{j \in \Omega_i^{>}} (b_{ij} - t_{ij}) \cdot p_{ij}} {\sum_{j \in \Omega_i^{>}} p_{ij}}.\]
Furthermore, the options $2, \dots, m_i - 1$ are given by the set $\Omega_i^{=}$. 
Their probabilities $\tilde p_{i,j}$ are identical to to their respective counterpart in  $\Omega_i^{=}$, while their values $v_{i,j}$ are given by the respective utility for \calP.
We assume w.l.o.g. that $v_{i,2} \ge v_{i,3} \ge \dots \ge v_{i,m_i-1} \ge 0$.
Finally, the last option $m_i$ combines all options in $\Omega_i^{<}$ that \calA will never take before having inspected all $n$ boxes of the phase.
Let \[\tilde p_{i,m_i} := \sum_{j \in \Omega_i^{<}} p_{ij} \quad \text{and} \quad v_{i,m_i} := 0.\]
The boxes can be opened in any order without opening costs.
Whenever outcome 1 is drawn from some box $i$, the process stops immediately with value $v_{i,1}$.
Otherwise, the process can be stopped voluntarily in case there is a strictly positive value available in an opened box or as fallback value $v^*$, and it stops after $n$ rounds at the latest.
When stopping without having option 1 drawn, we receive the maximum value from all opened boxes (including the fallback value $v^*$).

We need to argue why our formulated problem is equivalent to the problem of finding a \calP-optimal tie-breaking rule.
When all boxes of the current phase are opened, the expected utilities are not always preserved:
In the original process, options from the set $\Omega_i^{<}$ of some box $i$ can possibly yield a positive value for \calP after the phase is completed, but it holds $v_{i,m_i} = 0$ in the reformulated process.
Crucially, the values within $\Omega_i^{<}$ never affect tie-breaking: 
In order to obtain a positive value with an option from $\Omega_i^{<}$, the following must hold:
\begin{enumerate}  
    \item \emph{Every} box from the current phase has to be opened.
    \item \emph{Every} box $i'$ from the current phase has to contain an option from $\Omega_{i'}^{<}$, as any option from $\Omega_{i'}^{=}$ or $\Omega_{i'}^{>}$ would be preferred by \calA.
    \item $v^* = 0$, because a fallback option with positive value for \calP would be preferred by \calA.
\end{enumerate}
As all boxes have to be opened, the opening order is irrelevant.
All decisions regarding the tie-breaking are made without incorporating the actual value of drawn options from $\Omega_{i'}^{<}$ of any box $i'$: 
Whenever \calA is indifferent, it must be upon observing an option from $\Omega_{i'}^{=}$ after opening some box $i'$, or due to a positive fallback value. 
But then, an option from $\Omega_i^{<}$ will never be considered by \calA.

Therefore, the choices for \calP-optimal tie-breaking are preserved with our formulated problem.
The same is true for the original expected utilities whenever they can influence the optimal action (opening order and stopping behavior).
Therefore an optimal action for the formulated problem constitutes an optimal tie-breaking rule.

We apply further adjustments to derive an optimal policy for our formulated problem.
For every box $i \in [n]$, let 
\[
    \tau_i := \max_{k \in [m_i-1]}\frac {\sum_{j=1}^k \tilde p_{ij} v_{ij}} {\sum_{j = 1}^k \tilde p_{ij}},
\]
i.e., $\tau_i$ is the maximum expected value for $\calP$ for an option from box $i$ conditioned on the event that an option from that box is taken (subject to being consistent with \calA's index policy).
Consequently, $\tau_i \ge v_{i,1}$.

An equivalent formulation of the problem is the following: 
(Weakly) increase the value of option $1$ to $v_{i,1}' := \tau_i$, but, whenever option 1 is drawn, costs of $c_{i,1} := \tau_i - v_{i,1}\geq 0$ are incurred. 
All other options keep their value and no costs occur for them, i.e., for $j > 1$, set $v_{i,j}' := v_{i,j}$ and $c_{i,j} := 0$. 

Similarly as in~\cite{KleinbergWW16}, to analyze the (expected) performance of any policy for this problem, let
\begin{align*}
    J(i) &:= \text{(random) index of the option that is in box } i \in [n],\\
    I_i &:= \mathbf 1_{\text{Box } i \in [n] \text{ is opened}},\\
    A_i &:= \mathbf 1_{\text{Option } J(i) \text{ from box } i \in [n] \text{ is taken}}.
\end{align*}
We can only select option $J(i)$ from box $i$ if the box was opened before:
\begin{align}
    A_i = 1 \implies I_i = 1 \quad \text{for all}~i \in [n].
    \label{eqn:open-before-accept}
\end{align}
Furthermore, if box $i \in [n]$ is opened and contains option 1, then we have to stop and select it:
\begin{align}
    I_i = 1 \text{ and } J(i) = 1 \implies A_i = 1 \quad \text{for all}~i \in [n].
    \label{eqn:always-take-outcome-1}
\end{align}
The utility $u_\calP$ that the policy yields for $\calP$ can be written as
\[
    u_\calP = \sum_{i = 1}^n A_i \cdot v_{i, J(i)} = \sum_{i = 1}^n A_i \cdot \left( v'_{i, J(i)} - c_{i,J(i)} \right).
\]
By linearity of expectation, this means:

\begin{align}
    \mathbb E[u_\calP] 
    &= \mathbb{E}\left[\sum_{i = 1}^n A_i \cdot \left( v'_{i,J(i)} - c_{i,J(i)} \right)\right] \notag\\
    &= \mathbb{E}\left[\sum_{i = 1}^n A_i \cdot v'_{i,J(i)}\right] - \mathbb{E}\left[\sum_{i = 1}^n A_i \cdot c_{i,J(i)}\right]\notag\\
    &= \mathbb{E}\left[\sum_{i = 1}^n A_i \cdot v'_{i,J(i)}\right] - \sum_{i = 1}^n \Pr[J(i) = 1 \cap A_i = 1] \cdot c_{i,1}\notag\\
    &= \mathbb{E}\left[\sum_{i = 1}^n A_i \cdot v'_{i,J(i)}\right] - \sum_{i = 1}^n \Pr[J(i) = 1 \cap I_i = 1] \cdot c_{i,1}\notag\\
    &= \mathbb{E}\left[\sum_{i = 1}^n A_i \cdot v'_{i,J(i)}\right] - \sum_{i = 1}^n \tilde p_{i,1} \cdot \Pr[I_i = 1] \cdot c_{i,1}\notag\\
    &= \mathbb{E}\left[\sum_{i = 1}^n A_i \cdot v'_{i,J(i)}\right] - \mathbb{E}\left[\sum_{i = 1}^n I_i \cdot \tilde p_{i,1} \cdot c_{i,1}\right].\label{eq:ambiguity-mod-pandora}
\end{align}
We use that $c_{i,J(i)} = 0$ unless $J(i) = 1$, but in this case, (\ref{eqn:open-before-accept}) and (\ref{eqn:always-take-outcome-1}) imply $A_i = I_i$.
Then we use that $I_i = 1$ and $J(i) = 1$ are independent events.

We now create an instance for the (original) Pandora's Box problem.
We keep values $v'_{i,1} \dots, v'_{i,m_i}$ and probabilities $\tilde p_{i,1}, \dots, \tilde p_{i,m_i}$ for every box $i \in [n]$. 
Moreover, we have the opening costs of $c_i := \tilde p_{i,1} \cdot c_{i,1}$ for every box $i \in [n]$. 
Recall that, in Pandora's Box problem there are no costs for certain options or obligations to accept them.
In particular, if we define $\tilde A_i$ and $\tilde I_i$ as the analogs of the above indicator variables for a policy for the Pandora's Box problem, then the analog of~\eqref{eqn:always-take-outcome-1} is \emph{not} required. The utility that the policy obtains in expectation is on the other hand (c.f.~\cite{KleinbergWW16})
\begin{equation}\label{eq:ambiguity-pandora}
    \mathbb E \left[\sum_{i = 1}^n \tilde A_i \cdot v'_{i, J(i)} - \tilde I_i \cdot c_i\right]
    =  \mathbb E \left[ \sum_{i = 1}^n \tilde A_i \cdot v'_{i, J(i)} \right] - \mathbb E \left[ \sum_{i = 1}^n \tilde I_i \cdot \tilde p_{i,1} \cdot c_{i,1} \right],
\end{equation}
which is the analog of~\eqref{eq:ambiguity-mod-pandora}. As such, the Pandora's Box instance is a relaxation of the former instance.
There does, however, exist an optimal policy for the Pandora's Box instance such that $\tilde A_i$ and $\tilde I_i$ fulfill the analog of~\eqref{eqn:always-take-outcome-1}: The fair cap $\faircap{i}$ of box $i$ is $\tau_i$ for every $i \in [n]$. 

This condition can be seen as follows. Recall that by definition, 
\[\tau_i = \left(\sum_{j = 1}^{k^*} \tilde p_{ij} v_{ij} \right) / \left({\sum_{j = 1}^{k^*} \tilde p_{ij}} \right)\] 
for some $k^* \in [m_i - 1]$, such that $v'_{i,j} \ge \tau_i$ for all $j \in [k^*]$, and $v'_{i,j} \le \tau_i$ for all $j \in [m_i] \setminus [k^*]$. Otherwise, $\tau_i$ was not chosen as the correct maximum. 
If $k^* = 1$, it follows $\tau_i = v_{i1}$ and therefore $c_i = 0$ in our constructed instance. In this case, the fair cap is ambiguous and must only fulfill $\faircap{i} \ge \max_{j \in [m_i]} v_{ij} = v_{i1}$, which is true.
Otherwise, it holds $\tau_i > v_{i1}$ and therefore $c_i > 0$, and the fair cap is unique.
We now prove that $\faircap{i} = \tau_i$ fulfills the definition of the fair cap:
\[
    \sum_{j = 1}^{m_i} \tilde p_{ij} \cdot \max\{0, v'_{ij} - \tau_{i}\} = c_i 
    ~\Leftrightarrow~
    \sum_{j = 1}^{k^*} \tilde p_{ij} \cdot (v'_{ij} - \tau_{i}) = c_i ~\Leftrightarrow~ \frac {\sum_{j = 1}^{k^*} \tilde p_{ij} v'_{ij} - c_i} {\sum_{j = 1}^{k^*} \tilde p_{ij}} = \tau_i.
\]
This equals the definition of $\tau_i$, because in the numerator, we have
\[
    \sum_{j = 1}^{k^*} \tilde p_{ij} v'_{ij} - c_i 
    = \sum_{j = 2}^{k^*} \tilde p_{ij} v'_{ij} + \tilde p_{i1} v'_{i1} - c_i
    = \sum_{j = 2}^{k^*} \tilde p_{ij} v_{ij} + \tilde p_{i1} \tau_{i} - \tilde p_{i1} \cdot (\tau_i - v_{i1})
    = \sum_{j = 1}^{k^*} \tilde p_{ij} v_{ij}.
\]

Thus $v'_{i,1} = \faircap{i}$ for all $i \in [n]$, and stopping with option $1$ when it is realized is always in accordance with the index policy.
Therefore, the following policy is optimal for the former problem (with the obligation to stop whenever option 1 is observed): Consider boxes in the order of $\tau_i$. When box $i$ is considered and option 1 was not drawn before, open box $i$ unless the maximum value from an opened box or the fallback value is greater than $\tau_i$. Always stop when option 1 is drawn. Break any remaining ties arbitrarily.
\end{proof}

\section{Missing Material from Section~\ref{sec:binary}}
\label{app:binary}

\begin{proof}[Proof of Theorem~\ref{thm:opt-algo-binary-boxes} (continued)]

    Now consider $i^* < i$. Suppose box $k$ was brought to a position $i_n$ in $\sigma^*$ such that it is on position $i$ w.r.t. the first $k$ boxes in $\sigma^*$. Then the fair cap of box $k$ must be strictly decreased, since otherwise it would receive the same payment and the same (relative) position in $\sigma_k$ and $\sigma^*$ (among the first $k$ boxes), see Lemma~\ref{lem:binary-boxes:opt-order-implies-contract}.

    Note that the fair cap necessary for position $i_n$ is still at least the basic fair cap of box $k$. Otherwise box $k$ was not the next box considered by the algorithm).

    Let $L$ denote the set of boxes between $i_n^*$ (current position of box $k$ in $\sigma^*$) and $i_n$ in $\sigma^*$, and let $r \in L$ denote the \emph{last} box in $\sigma^*$ with an index greater than $k$. The position of $r$ in $\sigma^*$ is denoted by $i_r$.

    Let $L_1 \subseteq L$ denote the set of subsequent boxes, i.e., boxes after $i_r$ and before $i_n$. Note that all boxes in $L_1$ are also present in $\sigma_k$. Let $\smash{\tilde{p_1}}$ denote the combined probability for a positive prize in any $L_1$-box and let $\smash{\tilde{b_1}}$ denote the combined expected value (for \calP) for a positive prize from $L_1$. Furthermore, let $b_r^k$ be the remaining value (for \calP) of the positive prize of box $r$ with a payment that raises its fair cap to the basic fair cap of box $k$.

    If $b_r^k \le b_k$, we can move $r$ from $i_r$ to $i_n$ (i.e. behind $L_1$) without decreasing the expected utility of $\sigma^*$. To prove this, we consider a different swap in $\sigma_k$, namely, the movement of box $k$ from its current position $i$ to the position right before $L_1$. We know that this swap cannot be strictly profitable, as position $i$ is an optimal position for box $k$ in $\sigma_k$.

    Let $\Delta_{i}$ denote the payment that lifts box $k$ from its basic fair cap to position $i$, and let $\Delta_{L_1, i}$ denote the additional payment to lift box $k$ from position $i$ right to the front of $L_1$. Then by the former argument, it holds
    \[
        \smash{\tilde{p_1}}\smash{\tilde{b_1}} + (1-\smash{\tilde{p_1}})p_k(b_k - \Delta_i) \ge p_k(b_k - \Delta_i - \Delta_{L_1,i}) + (1-p_k)\smash{\tilde{p_1}}\smash{\tilde{b_1}}.
    \]
    This is equivalent to
    \begin{equation}
        \smash{\tilde{b_1}} + \Delta_i + \frac {\Delta_{L_1, i}} {\smash{\tilde{p_1}}} \ge b_k
        \label{Eq:ForwardSwapOfBoxKNotProfitable}
    \end{equation}
    If we would swap box $r$ in $\sigma^*$ between positions $i_n$ and $i_r$, we would yield the same equation, with $b_r^k$ instead of $b_k$. This holds because we have the same fair caps for all boxes $\{1, \dots, k-1\}$ (by Lemma~\ref{lem:binary-boxes:opt-order-implies-contract}), so the same payments are required within $\sigma^*$. However, as $b_k \ge b_r^k$ and Equation~\eqref{Eq:ForwardSwapOfBoxKNotProfitable} was true already for $b_k$, it must be also true for $b_r^k$. Hence swapping box $r$ from $i_r$ to $i_n$ cannot decrease the utility.

    If $b_r^k > b_k$, we need to consider two swaps for box $r$ in $\sigma^*$ that cannot be both decreasing the utility for the principal. More precisely, we can either swap box $r$ right before box $k$ or we can swap box $k$ right after box $r$ without deterioration. Let $L_2 \subseteq L$ denote the boxes between $i_n^*$ (where box $k$ is) and $i_r$ (where box $r$ is) in $\sigma^*$. Again, combined probability for a positive prize within $L_2$ is $\smash{\tilde{p_2}} > 0$ and combined expected value is $\smash{\tilde{b_2}}$. Let $\Delta_{i_r}$ denote the payment that lifts box $k$ to position $i_r$, and let $\Delta_{i_r, i_n^*}$ denote the additional payment lifting box $k$ from $i_r$ to its current position in $\sigma^*$, namely, $i_n^*$.

    Suppose swapping box $r$ right before box $k$ decreases the utility, then it holds
    \begin{align*}
         &~ p_k(b_k - \Delta_{i_r} - \Delta_{i_r, i_n^*}) + (1-p_k)\smash{\tilde{p_2}}\smash{\tilde{b_2}} + (1-p_k)(1-\smash{\tilde{p_2}})p_r(b_r^k - \Delta_{i_r})\\
        >&~ p_r(b_r^k - \Delta_{i_r} - \Delta_{i_r, i_n^*}) + (1-p_r)p_k(b_k - \Delta_{i_r} - \Delta_{i_r, i_n^*}) + (1-p_r)(1-p_k)\smash{\tilde{p_2}}\smash{\tilde{b_2}}.
    \end{align*}
    This is equivalent to
    \[
        \smash{\tilde{p_2}} \Delta_{i_r}(1-p_k) + \Delta_{i_r, i_n^*}(1-p_k) + (1-p_k)\smash{\tilde{p_2}}\smash{\tilde{b_2}} > \smash{\tilde{p_2}}b_r^k(1-p_k) + p_k(b_r^k - b_k).
    \]
    For $p_k = 1$, this is would equivalent to $0 > b_r^k - b_k$, which is a contradiction. Thus it must hold $p_k < 1$ and therefore
    \[
        \Delta_{i_r} + \frac{\Delta_{i_r, i_n^*}}{\smash{\tilde{p_2}}} + \smash{\tilde{b_2}} > b_r^k + \frac{p_k}{\smash{\tilde{p_2}}(1-p_k)} (b_r^k - b_k).
    \]
    As $\frac{p_k}{\smash{\tilde{p_2}}(1-p_k)} \ge 0$ and $b_r^k - b_k > 0$, it follows
    \begin{equation}
        \Delta_{i_r} + \frac{\Delta_{i_r, i_n^*}}{\smash{\tilde{p_2}}} + \smash{\tilde{b_2}} > b_k.
        \label{Eq:ForwardSwapOfBoxRNotProfitable}
    \end{equation}

    Now suppose swapping box $k$ right after box $r$ is decreasing the utility, then it holds
    \begin{align*}
         &~ p_k(b_k - \Delta_{i_r} - \Delta_{i_r, i_n^*}) + (1-p_k)\smash{\tilde{p_2}}\smash{\tilde{b_2}} + (1-p_k)(1-\smash{\tilde{p_2}})p_r(b_r^k - \Delta_{i_r})\\
        >&~ \smash{\tilde{p_2}}\smash{\tilde{b_2}} + (1-\smash{\tilde{p_2}})p_r(b_r^k - \Delta_{i_r}) + (1-\smash{\tilde{p_2}})(1-p_r)p_k(b_k - \Delta_{i_r}).
    \end{align*}
    This is equivalent to
    \[
        p_r b_k(1-\smash{\tilde{p_2}}) + \smash{\tilde{p_2}} b_k > \smash{\tilde{p_2}}\smash{\tilde{b_2}} + \Delta_{i_r, i_n^*} + \smash{\tilde{p_2}} \Delta_{i_r} + p_r b_r (1-\smash{\tilde{p_2}}).
    \]
    With $b_r^k > b_k$ it follows
    \begin{equation}
        b_k > \smash{\tilde{b_2}} + \Delta_{i_r} + \frac {\Delta_{i_r, i_n^*}}{\smash{\tilde{p_2}}}.
        \label{Eq:BackwardSwapOfBoxRNotProfitable}
    \end{equation}
    Equations \eqref{Eq:ForwardSwapOfBoxRNotProfitable} and \eqref{Eq:BackwardSwapOfBoxRNotProfitable} cannot both be true, as they contradict each other. Therefore, we can iteratively remove the current box $r$ from the set $L$ in $\sigma^*$: If $b_r^k \le b_k$, then we can swap box $r$ to position $i_n$, removing it from $L$. If $b_r^k > b_k$, then we can either swap box $r$ right before box $k$, removing it from $L$, or we can swap box $k$ right after box $r$ and have $L = L_1$ immediately. All these swaps cannot deteriorate the utility for \calP. Eventually, $L = L_1$, and Equation \eqref{Eq:ForwardSwapOfBoxKNotProfitable} shows us again that we can swap box $k$ behind $L_1$ without decreasing the utility, proving the inductive step when $i^* < i$.
\end{proof}

\section{Missing Material from Section~\ref{sec:iid}}
\label{app:iid}

\begin{proof}[Proof of Lemma~\ref{lem:iid-p2}]
    The lemma is clear when $a_0 \ge \faircap{}(0)$ (because then $t_i=0$ for all basic boxes in any contract), so assume $a_0 < \faircap{}(0)$ in the following.
 
    Suppose there are rounds $\ell, \ell+1$ with basic boxes and $a_0 + t_\ell < \faircap{}(0) = a_0 + t_{\ell+1}$. We show that by swapping these boxes the principal weakly improves. Formally, we set $t_\ell' = t_{\ell+1}$ and $t_{\ell+1}' = t_\ell$ (and $t_i' = t_i$ for all other $i$),
 
	Fix any (random) vector of prizes of all $n$ rounds. Let $j(i)$ denote the outcome of round $i \in [n]$. Furthermore $j_1 = \max_{i = 1}^{\ell-1} j(i)$ denotes the best outcome for $\mathcal A$ until round $\ell-1$. Moreover, let $E(x, j)$ denote the expected utility for $\mathcal P$ when the process starts at round $x$ with prize $j$ being the best outcome for the agent so far.
 
	If box $\ell$ is not reached and the process stopped earlier, the outcome does not change under the modified contract $T'$. Otherwise, we reach round $\ell$ with a current maximum of $a_{j_1} \le \faircap{}(0)$ and distinguish between multiple cases:
	\begin{enumerate}
		\item $j(\ell) = j(\ell+1) = 0$.
		Then the process stops in round $\ell+1$ with utility $v - t_{\ell+1}$ for $\mathcal P$. 
		Under $T'$, the process stops in round $\ell$ with utility $v - t_{\ell}'$, which is the same.
		\item $j(\ell) = 0$ and $1 \le j(\ell+1) \le j^*$.
		Then round $\ell+2$ is reached. The expected utility of $\mathcal P$ is given by $E(\ell+2, j')$, where $j'$ is the prize from $\{0, j_1, j(\ell+1)\}$ with maximum utility for \calA.
		Under $T'$, the process stops in round $\ell$ with utility $v - t_\ell'$ for $\mathcal P$.
		\item $j(\ell) = 0$ and $j(\ell+1) > j^*$.
		The the process stops in round $\ell+1$ with utility 0 for $\mathcal P$.
		Under $T'$, the process stops in round $\ell$ with utility $v - t_\ell'$ for $\mathcal P$.
		\item $1 \le j(\ell) \le j^*$ and $j(\ell+1) = 0$.
		The the process stops in round $\ell+1$ with utility $v - t_{\ell+1}$ for $\mathcal P$.
		Under $T'$ however, round $\ell + 2$ is reached. The expected utility of $\mathcal P$ is given by $E(\ell+2, j')$, where $j'$ is the prize from $\{0, j_1, j(\ell)\}$ with maximum utility for \calA.
		\item $j(\ell) > j^*$ and $j(\ell+1) = 0$. Then the process stops after round $\ell$ with utility 0 for the principal under both $T$ and $T'$.
		\item $j(\ell) \neq 0$ and $j(\ell+1) \neq 0$. In this case, outcome 0 is not drawn in either round, so the utility is the same under both $T$ and $T'$, as the outcome that is taken is not influenced by the payment.
	\end{enumerate}
	As we consider i.i.d.\ prizes, cases (2) and (4) have identical probabilities. In all other cases, $\mathcal P$ weakly benefits from the modified contract $T'$.
	By applying the swap argument iteratively, the lemma follows.
\end{proof}

\begin{proof}[Proof of Lemma~\ref{lem:iid-p1}]
	Suppose there is some $\ell \in [k-1]$ with $\faircap{\ell} > \faircap{\ell+1}$.
	Then there are some $r, s \in [m]$ with $\faircap{\ell} = a_r > a_s = \faircap{\ell+1}$.
	For any $i \in [n]$, let $R_i$ denote the event that box $i$ is opened.
	For any $x, y \in [n]$ with $x \le y$, let $E(x, y)$ denote the expected utility that \calP gains from the process from boxes $\{x,x+1,\ldots,y-1,y\}$, conditioned on $R_x$. 
	Furthermore, for any fair cap $\faircap{} > \faircap{}(0)$, let $t(\faircap{})$ denote the payment for prize 0 required to obtain that fair cap.
	Then:
	\[
		E(1,n) = E(1,\ell-1) + \mathbb P(R_\ell) \cdot p \cdot [v - t(a_r)] + \mathbb P(R_{\ell+1}) \cdot p \cdot [v - t(a_s)] + \mathbb P(R_{\ell+2}) \cdot E(\ell+2,n).
	\]
	Note that by definition, the process always stops immediately during Phase 1 if prize 0 is drawn. The fair cap can only be increased to $\faircap{i}$ if $\mathcal A$'s valuation for prize 0 above the fair cap is increased.
	Thus, for all $i=1,\ldots,\ell$, we have $a_0 + t(\faircap{i}) > \faircap{i}$ and, by the order of fair caps, $a_0 + t(\faircap{i}) > \faircap{i+1}$.
 
	This means that $E(1,\ell-1)$ is independent of the fair caps $\faircap{\ell}$ and $\faircap{\ell+1}$, as long as they are not above $\faircap{\ell-1}$ and change set of boxes considered with $E(1,\ell-1)$.
	Analogously, $E(\ell+2,n)$ is independent of these fair caps, as long as they are not below $\faircap{\ell+2}$ and change the set of boxes considered.

	Also, for any round $i \in [k]$ of Phase 1, the process has stopped before round $i$ if and only if a prize value more than $\faircap{i}$ has been found before round $i$. Thus, it holds that $\mathbb P(R_i) = (\sum_{j \in S(\faircap{i})} q_j)^{i-1} =: r(\faircap{}_i, i)$, where $S(\faircap{i}) := \{j \in [m] : a_j \le  \faircap{i}\}$, for any fair cap $\faircap{i}$ and any box $i \in [n]$.
	Thus:
	\[
		E(1,n) = E(1,\ell-1) + r(a_r, \ell) \cdot p \cdot [v - t(a_r)] + r(a_s, \ell+1) \cdot p \cdot [v - t(a_s)] + r(\faircap{\ell+1},\ell+2) \cdot E(\ell+2,n).
	\]
	
	We consider two adaptations of the payments. We either replace $\faircap{\ell}$ by $\faircap{\ell+1}$, or vice versa.
	\begin{itemize}
	    \item Adaptation 1: $\faircap{\ell}' := \faircap{\ell+1} = a_s$ and $\faircap{i}' := \faircap{i}$ for all $i \neq \ell$.

     	This does not change the order of fair caps. Thus the expected utility for $\mathcal P$ is given by
	\[
		E'(1,n) = E(1,\ell-1) + r(a_s, \ell) \cdot p \cdot [v - t(a_s)] + r(a_s, \ell+1) \cdot p \cdot [v - t(a_s)] + r(\faircap{\ell+2},\ell+2) \cdot E(\ell+2,n).
	\]
        \item Adaptation 2: $\faircap{\ell+1}'' := \faircap{\ell} = a_r$ and $\faircap{i}'' := \faircap{i}$ for all $i \neq \ell+1$.
        
	This does not change the order of fair caps. Thus the expected utility for $\mathcal P$ is given by
	\[
		E''(1,n) = E(1,\ell-1) + r(a_r, \ell) \cdot p \cdot [v - t(a_r)] + r(a_r, \ell+1) \cdot p \cdot [v - t(a_r)] + r(\faircap{\ell+2},\ell+2) \cdot E(\ell+2,n).
	\]
	\end{itemize}
 
	We now claim that at least one of these adaptations is always weakly beneficial for $\mathcal P$ by showing that $E(1,n) > E'(1,n)$ implies $E''(1,n) > E(1,n)$. Indeed, it holds
	\begin{align*}
		&& E(1,n) &> E'(1,n) \\
		&\Leftrightarrow& r(a_r, \ell) \cdot p \cdot [v - t(a_r)] &> r(a_s, \ell) \cdot p \cdot [v - t(a_s)]\\
		&\Rightarrow& p \cdot [v - t(a_r)] &> p \cdot [v - t(a_s)]\\
		&\Rightarrow& r(a_r, \ell+1) \cdot p \cdot [v - t(a_r)] &> r(a_s, \ell+1) \cdot p \cdot [v - t(a_s)]\\
		&\Leftrightarrow& E''(1,n) &> E(1,n)
	\end{align*}
	We used that $r > s$, which implies $r(a_r, \ell) \ge r(a_s, \ell)$ as well as $r(a_r, \ell+1) \ge r(a_s, \ell+1)$.
	This proves the claim. 
	
    Applying this adjustment iteratively, we see that there is an optimal contract in which all boxes in Phase 1 have the same fair cap.
\end{proof}

\begin{proof}[Proof of Lemma~\ref{lem:iid-final}]
	Consider an optimal contract with multiple outcomes that are targeted with the payment for outcome 0.
	Then there are $j_1, j_2 \in [j^*]$ with $j_1 < j_2$ such that $n_{j_1} > 0$ and $n_{j_2} > 0$.
	Assume w.l.o.g.\ that the outcomes between $j_1$ and $j_2$ are never targeted, i.e., $n_{j_1 +1} = \dots = n_{j_2 - 1} = 0$.
	
	Suppose we could swap a $\delta$-fraction of boxes from target outcome $j_2$ to $j_1$ (or vice versa), for some $\delta \in \mathbb R$.
	Then let $n_{j_1}' := n_{j_1} + \delta$, $n_{j_2}' := n_{j_2} - \delta$, $n_j' = n_j$ for all $j' \in [j^*] \setminus \{j_1, j_2\}$.
	Accordingly, define $N_j' := \sum_{j' = 1}^j n_{j'}'$.
	
	We want to maximize the expected utility for $\mathcal P$ over $\delta$.
	It holds that
	\begin{align*}
		~&\frac {\partial} {\partial \delta} \left( \sum_{j = 1}^{j^*} v_j \cdot  Q_j^{n} \cdot \left( \left(\frac{p} {Q_j} + 1\right)^{N_j'} - \left(\frac{p} {Q_j} + 1\right)^{N_{j-1}'}\right) \right)\\
		=~& \sum_{j = j_1}^{j_2-1} v_j Q_j^n \ln\left(\frac p {Q_j} + 1\right) \cdot \left(\frac p {Q_j} + 1\right)^{N_j'} - \sum_{j = j_1+1}^{j_2} v_j Q_j^n \ln\left(\frac p {Q_j} + 1\right) \cdot \left(\frac p {Q_j} + 1\right)^{N_{j-1}'}\\
		=~& v_{j_1} Q_{j_1}^n \ln\left(\frac p {Q_{j_1}} + 1\right)\cdot \left(\frac p {Q_{j_1}} + 1\right)^{N_{j_1}'} - v_{j_2} Q_{j_2}^n \ln\left(\frac p {Q_{j_2}} + 1\right) \cdot \left(\frac p {Q_{j_2}} + 1\right)^{N_{j_2 - 1}'} \\
		&+\sum_{j = j_1+1}^{j_2-1} v_j Q_j^n \ln\left(\frac p {Q_j} + 1\right) \left( \left(\frac p {Q_j} + 1\right)^{N_j'} - \left(\frac p {Q_j} + 1\right)^{N_{j-1}'}\right)\\
		=~& v_{j_1} Q_{j_1}^n \ln\left(\frac p {Q_{j_1}} + 1\right)\cdot \left(\frac p {Q_{j_1}} + 1\right)^{N_{j_1}'} - v_{j_2} Q_{j_2}^n \ln\left(\frac p {Q_{j_2}} + 1\right) \cdot \left(\frac p {Q_{j_2}} + 1\right)^{N_{j_2 - 1}'}.
	\end{align*}
	We used that for all $j \in \{j_1 +1, \dots, j_2 -1\}$, it holds that $n_j' = 0$ and therefore $N_j' = N_{j-1}'$.
	
 For any extreme point, it follows that
	\[v_{j_1} Q_{j_1}^n \ln\left(\frac p {Q_{j_1}} + 1\right)\cdot \left(\frac p {Q_{j_1}} + 1\right)^{N_{j_1}'} = v_{j_2} Q_{j_2}^n \ln\left(\frac p {Q_{j_2}} + 1\right) \cdot \left(\frac p {Q_{j_2}} + 1\right)^{N_{j_2 - 1}'}.\]
	Considering the second derivative at any extreme point, it holds that
	\begin{align*}
		~&\frac {\partial^2} {\partial \delta^2} \left( \sum_{j = 1}^{j^*} v_j \cdot  Q_j^{n} \cdot \left( \left(\frac{p} {Q_j} + 1\right)^{N_j'} - \left(\frac{p} {Q_j} + 1\right)^{N_{j-1}'}\right) \right)\\
		=~& v_{j_1} Q_{j_1}^n \ln\left(\frac p {Q_{j_1}} + 1\right)^2\cdot \left(\frac p {Q_{j_1}} + 1\right)^{N_{j_1}'} - v_{j_2} Q_{j_2}^n \ln\left(\frac p {Q_{j_2}} + 1\right)^2 \cdot \left(\frac p {Q_{j_2}} + 1\right)^{N_{j_2 - 1}'}\\
		=~& \ln\left(\frac p {Q_{j_1}} + 1\right) \cdot v_{j_2} Q_{j_2}^n \ln\left(\frac p {Q_{j_2}} + 1\right)\cdot \left(\frac p {Q_{j_2}} + 1\right)^{N_{j_2 - 1}'} 
        \\
        ~&- v_{j_2} Q_{j_2}^n \ln\left(\frac p {Q_{j_2}} + 1\right)^2 \cdot \left(\frac p {Q_{j_2}} + 1\right)^{N_{j_2 - 1}'}\\
		=~& \left(\ln\left(\frac p {Q_{j_1}} + 1\right) - \ln\left(\frac p {Q_{j_2}} + 1\right) \right) \cdot v_{j_2} Q_{j_2}^n \ln\left(\frac p {Q_{j_2}} + 1\right)\cdot \left(\frac p {Q_{j_2}} + 1\right)^{N_{j_2 - 1}'}\\
		>~& 0.
	\end{align*}
	   The last inequality holds because $j_1 < j_2$, so $Q_{j_1} < Q_{j_2}$, and $\frac p {Q_{j_2}} > \frac p {Q_{j_2}}$.
      Furthermore, it holds $\ln\left(\frac p {Q_{j_1}} + 1\right) > \ln\left(\frac p {Q_{j_2}} + 1\right) \ge 0$ due to monotonicity of the logarithm and $\frac p {Q_{j_2}} \ge 0$. All other terms of the product are also non-negative.
	
	As the second derivative is positive at every extremum, there can be at most one local minimum, but no local maximum. 
	Hence, we maximize the function by choosing either the smallest or largest possible value for $\delta$. Therefore, it represents an improvement to either decrease $n_{j_1}$ to 0 and increase $n_{j_2}$ by $n_{j_1}$, or vice versa. Applying this insight repeatedly, the statement of the lemma follows.
\end{proof}

\end{document}